\newtheorem{theorem}{Theorem}[section]
\newtheorem{proposition}[theorem]{Proposition}
\newtheorem{lemma}[theorem]{Lemma}
\newtheorem{corollary}{Corollary}[theorem]
\newtheorem{definition}{Definition}[section]
\newtheorem{remark}{Remark}[theorem]
\newtheorem{example}{Example}[section]
\definecolor{Blue}{rgb}{0.3,0.3,0.9}
\begin{document}

\normalsize

\title{Time-inhomogeneous Quantum Markov Chains with Decoherence on Finite State Spaces}

\maketitle

\vskip 6pt
\author{ Chia-Han Chou and  Wei-Shih Yang}
\vskip 3pt
\indent Department of Mathematics, Temple University,\\
\indent     Philadelphia, PA 19122

\vskip 6pt
Email: chia-han.chou@temple.edu, yang@temple.edu

\vskip 12pt

KEY WORDS:  Quantum Markov Chain; Quantum Decoherence; Time-Inhomogeneous

\vskip12pt
AMS classification Primary: 81Q10, Secondary: 82B10
%81Q10(Self-adjoint operator theory in quantum theory),82B10 (Quantum equilibrium statistical mechanics)

PACS numbers: 03.65.Yz, 05.30.-d, 03.67.Lx, 02.50.Ga
%03.65.Yz (Decoherence; open systems; quantum statistical methods), 05.30.-d (Quantum statistical mechanics), 03.67.Lx (Quantum computation architectures and implementations), 02.50.Ga (Markov Processes)
\vskip 12pt

\begin{abstract}
 %{\color{red}  [delete] In quantum computation theory, quantum Markov chains and quantum walks have been utilized by many quantum search algorithms which provide improved performance over their classical counterparts. More recently, due to the importance of the quantum decoherence phenomenon, decoherent quantum walks and their applications have been studied on a wide variety of structures. In this paper, [end delete] } 
  We introduce and study time-inhomogeneous quantum Markov chains with parameter $\zeta \ge 0$ and decoherence parameter $0 \leq p \leq 1$ on finite spaces and their large scale equilibrium properties. Here $\zeta$ resembles the inverse temperature in the annealing random process and $p$ is the decoherence strength of the quantum system.  Numerical evaluations show that if $ \zeta$ is small, then quantum Markov chain is ergodic for all $0 < p \le 1$ and if $ \zeta $ is large, then it has  multiple limiting distributions for all $0 < p \le 1$. In this paper, we prove the ergodic property in the high temperature region $0 \le \zeta \le 1$. We expect that the phase transition occurs at the critical point $\zeta_c=1$. For coherence case $p=0$, a critical behavior of periodicity also appears at critical point  $\zeta_o=2$.

  %that convergence of decoherent time-inhomogeneous quantum Markov chain on finite state spaces. 

\end{abstract}

\section{Introduction}
\setcounter{equation}{0}
In order to develop more efficient algorithms for tackling a wide variety of problems in classical computer science, researchers started utilizing randomness techniques such as Ulam and von Nuemann’s Markov Chain Monte Carlo (MCMC) method \cite{MetroUlam} in 1940s. This method was later refined and made well known as the Metropolis-Hastings algorithm \cite{Hastings} with applications in different areas. Even though Monte Carlos methods could sometimes return incorrect solutions with given probability, the key idea behind the methods was that the true solution can be approximated with high probability by repeating Monte Carlo simulations.

More recently, the notion of quantum computation has gained popularity,
"qubit" takes a complex unit instead of "bit" the usual binary values of zero and one. To preserve a cohesive quantum system, the family of qubits comprising the memory of the computer go through unitary evolution, rather than the traditional system of gates in classical computation theory. The state of the quantum system can be observed, and collapsing the system to one unique state from a superposition of various state after the completion of each algorithm. The probability of observing any given state after observation is proportional to the absolute value squared of the amplitude of the system at that state. So, a false solution may in fact be observed which is similar to Monte-Carlo methods. However, if the algorithm is cleverly constructed, the correct solution is observed with significant likelihood.

Due to the quantum mechanical nature of quantum computation, new types of quantum  algorithms have appeared. Moreover, these algorithms are more efficient than existing classical algorithms because the run times are better. For instance, both integer factorization and discrete logarithms
undergo an exponential speedup using Shor’s algorithm \cite{Shor}. Not only an exponential improvement, Grover’s search algorithm provides a quadratic
speedup over any known classical search algorithm \cite{Grover}, and on a discrete space, Grover’s algorithm is defined by discrete-time quantum walk, which is the natural extension of a Markov chain driven classical walk to the quantum setting. 

On the other hand, if a quantum system were perfectly isolated, it would maintain coherence indefinitely, but it would be impossible to manipulate or investigate it. If it is not perfectly isolated, for example during a measurement, coherence is shared with the environment and appears to be lost with time which is called quantum decoherence. This concept was first introduced by H. Zeh \cite{ZehDecoh} in 1970, and then formulated mathematically for quantum walks by T. Brun \cite{BrunDecoh}. For both coin and position space decoherent Hadamard walk, K. Zhang proved in \cite{ZhangQuantumLimit} that with symmetric initial conditions, it has Gaussian limiting distribution. More recently, the fact that the limiting distribution of the rescaled position discrete-time quantum random walks with general unitary operators subject to only coin space decoherence is a convex combination of normal distributions under certain conditions is proved by S. Fan, Z. Feng, S. Xiong and W. Yang \cite{YangConvexNormal}. The decoherent quantum analogues of Markov chains and random walks on a finite 
%{\color{red} [delete] and discrete infinite space [end delete] } 
space will be defined and elaborated in this paper.

In fact, classical Markov chain limit theorems for the discrete time walks are well known and have had important applications in related areas \cite{Rurrett} and \cite{Mixing}. However, the primary goal of this paper is to examine the limiting behavior of the new model, discrete time-inhomogeneous quantum walk with decoherence on finite spaces 
%{\color{red} [delete] and infinite  spaces,} 
and generalize the results from the classical theorems to the quantum analogues. In this paper, we introduce and  study time-inhomogeneous quantum Markov chains with parameter $\zeta \ge 0$ decoherence parameter $0 \le p \le 1$ on finite spaces and their large scale equilibrium properties. Here $\zeta$ resembles the inverse temperature in the annealing random process and $p$ is the decoherence strength of the quantum system.  Numerical evaluations show that if $ \zeta$ is small, then quantum Markov chain is ergodic for all $0 < p \le 1$ and if $ \zeta $ is large, then it has  multiple limiting distributions for all $0 < p \le 1$. In this paper, we prove the ergodic property in the high temperature region $0 \le \zeta \le 1$. We conjecture that the  phase transition is at the critical point $\zeta_c=1$.  For coherence case $p=0$, a critical behavior of periodicity also appears at critical point  $\zeta_o=2$.

For the time homogeneous case $\zeta =0$ and $o < p \le 1$, the limiting distribution has been obtained by Lagro, Yang and Xiong \cite{LagroQuantumLimit} under very general conditions similar to the Perron-Frobenius type of conditions. Our paper extends \cite{LagroQuantumLimit} to time-inhomogeneous case, but with stronger assumptions on the transition matrices.

This paper is organized as follows. In Section \ref{sec:notations and definitions}, we set the notations, definitions and introduce  the model.  In Section \ref{sec:path integral},  we develop  a path integral formula for time-inhomogeneous decoherent quantum Markov chains, Theorem  \ref{th:Compond Markov Chain}. In Section \ref{sec:convergence to equilibrium}, we obtain a limiting theorem for classical time-inhomogeneous Markov chains, Theorem \ref{ClassicConver}. In Section \ref{sec:convergence of density operators},  we use these theorems to prove our  main result, Theorem \ref{finalEqui}. In Section \ref{sec:simulations}, we provide numerical evaluation to support our conjectures. In Section \ref{sec:conclusion}, we  make our conclusions and discuss  some problems for further study.

%{\color{red} [delete] we obtain first-return properties of the time-inhomogeneous Markov chain with decoherence are derived rigorously in detail, and prove the convergence to equilibrium of the decoherent quantum Markov chains with time-inhomogeneous unitary operators in general finite spaces using path integral formulas. Moreover, the equilibrium theory is illustrated by numerical simulations.  [end delete]}

\section{Notations and Definitions}\label{sec:notations and definitions}
In classical probability, a random walk on $\mathbb{Z}$ is a Markov process described by a
stochastic transition matrix $T$. On the other hand, for a quantum walk, instead of the transition matrix, the evolution of the system is described by a unitary operator U acting on a Hilbert space H. Several different models for quantum walks have been popularized.
The two most commonly used are the coined walk of Aharonov et al \cite{AharonovQuanGraph} and the quantum markov chain of Szegedy \cite{SzegedyMarkovC}. Recently, S. Fan, Z. Feng, S. Xiong and W. Yang et al. \cite{YangConvexNormal} demonstrated that under certain conditions, the limiting distribution of the rescaled position discrete-time decoherent quantum coined walks is a convex combination of normal distributions. All quantum walks elaborated here in this section are based on homogeneous coined Markov chains.

We consider the quantum states $|1\rangle$, ..., $|m\rangle$. If $m=2$, then it represents the head and tail respectively when we flip a coin. Let $H$ be the Hilbert space spanned by the orthonormal basis, $|1\rangle$, ..., $|m\rangle$, 
$$ H=span\{|1\rangle, ..., |m\rangle\}.$$
 %Let's consider the states $|1\rangle$ and $|2\rangle$ which represent the head and tail respectively when you flip a coin. Then, we suppose that they are orthonormal, and the coined Hilbert space $H$ is defined by 
%$$ H=span\{|1\rangle,|2\rangle\},$$
%which means that any element in $H$ can be written as linear combination of $|1\rangle$ and $|2\rangle$. 
We denote the inner product of the space by $\langle|i\rangle,|j\rangle\rangle:=\langle i|j\rangle$, and $|i\rangle^*=\langle i|$ for $x=1,2, ..., m$. 
%Therefore, we have for instance,
%$$\langle 1|1\rangle=1 \text{, and } \langle 1|2\rangle=0$$
%On the other hand, 
For n=1, 2, 3, ..., let $U_n:H \rightarrow H$ be a unitary operator acting on the Hilbert space $H$ itself. 
%Recall that a unitary operator $U$ satisfies the property that $UU^*=U^*U=I$ (where $I$ denotes the identity operator, and $U^*$ is the adjoint of $U$).
%
%For n=1, 2, 3, ..., let $U:H \rightarrow H$ be a unitary operator acting on the Hilbert space $H$ itself. Recall that a unitary operator $U$ satisfies the property that $UU^*=U^*U=I$ (where $I$ denotes the identity operator, and $U^*$ is the adjoint of $U$).
Now, in order to define a decoherent quantum Markov chain over the Hilbert space $H$,
we consider the decoherence parameter $p \in [0,1]$, and define for $i=1, ..., m$
$$ A_i=\sqrt{p} \cdot |i\rangle\langle i|,$$
and 
$$ A_0=\sqrt{1-p}\cdot I$$
Note that $\{A_0,A_1, ..., A_m\}$ has the property that $\sum_i {A_i}^*A_i=I$, and is called a measurement over the space $H$. This notion will be generally defined later.

\begin{definition}\label{InhomoDecohOperator}
	 Let $L(H)$ be the set of linear operators on $H$. 
	%$|x\rangle\in H$, 
 For $\rho \in L(H)$,
	%$\rho:=|x\rangle\langle x|$, 
	we define the $n$-th step time-inhomogeneous decoherent operator $\Phi_n$ such as 
	$$ \Phi_n(\rho)=\sum_{i=0}^{m}A_{i}U_n\rho U_{n}^{*}A_{i}^{*}$$
\end{definition}

Now, suppose that $x\in \{1, ..., m\}$ is the initial position, and $\rho_0=|x\rangle \langle x|$, and the $n$ step time-inhomogeneous quantum Markov chain is defined as
\begin{eqnarray}
\rho_{n}=\underset{n \text{ times}}{\Phi_n\cdots\Phi_1}(\rho_0),
\end{eqnarray}
and, the probability of getting $y\in \{1, ..., m\}$ from the initial position $x\in \{1, ..., m\}$ after $n$ steps is defined as the trace of $|y\rangle\langle y|\rho_n$, we denote it by
\begin{eqnarray}
P_n(x,y):=Tr\Big(|y\rangle\langle y|\rho_n\Big).\label{InhomoDecohProbability}
\end{eqnarray}

\begin{example}\label{HadamardMC}
	Let $U_n=U$ such as,
	\begin{eqnarray}
	U=
	\left[
	\begin{array}{ c c  }
	\frac{1}{\sqrt 2} &\frac{1}{\sqrt 2} \\
	\frac{1}{\sqrt 2}&-\frac{1}{\sqrt 2}
	\end{array} \right]
	\end{eqnarray}
	Then the quantum Markov chain is called the  time-homogeneous Hadamard (fair coined) quantum Markov chain.
\end{example}

%%In this paper, instead of homogeneous unitary operators $U$, let's first consider 
\begin{example}\label{time-inho HadamardMC}
Let $\displaystyle U_n:H \rightarrow H$, be unitary defined by 
$$ U_n=\begin{bmatrix} 
\sqrt{1-\frac{\lambda}{n^{\zeta}}} & \sqrt{\frac{\lambda}{n^{\zeta}}} \\
\sqrt{\frac{\lambda}{n^{\zeta}}} & -\sqrt{1-\frac{\lambda}{n^{\zeta}}} 
\end{bmatrix},$$
 time inhomogeneous unitary operators, where $\lambda$ and $\zeta$ are non negative real numbers. 
%%And we define the decoherent time-inhomogeneous quantum Markov chain with 
%%\begin{eqnarray}
%%\Phi_n(\rho)=\sum_{i=0}^{2}A_{i}U_n\rho {U_n}^{*}A_{i}^{*}
%%\end{eqnarray}
%%with $ A_0=\sqrt{1-p}\cdot I$, and  
%%$ A_i=\sqrt{p} \cdot |i\rangle\langle i|,$ for $p \in [0,1]$, and $i=1,2$. Similarly
%%\begin{eqnarray}
%%\rho_{n}=\Phi_n\cdots\Phi_1(\rho_0),
%%\end{eqnarray}
%%Therefore, the probability of getting $y\in \{1,2\}$ from  $x\in \{1,2\}$ at time $n$  is defined as
%%\begin{eqnarray}
%%P_n(x,y):=Tr\Big(|y\rangle\langle y|\rho_n\Big).
%%\end{eqnarray}
Note that we can easily obtain the homogeneous case from this setting by letting $\zeta=0$. For example, if $\zeta=0$ and $\lambda=\frac{1}{2}$, we have the fair coined quantum Markov chain from Example \ref{HadamardMC}.
\end{example}

\begin{example}\label{time-inho m by m HadamardMC}

Let 
$H=span\{|1\rangle,...,|m\rangle\}$ with $m\in\mathbb{N}$, and $U_n=e^{i\frac{G}{\sqrt{n^\zeta}}}$ where $G$ is an $m\times m$ self-adjoint  matrix, where $\zeta$ is a  non-negative real number.

\end{example}

\section{Compound Markov chain representation and basic properties}\label{sec:path integral}
Let's start by proving some basic properties before we go to the equilibrium convergence theorem. First, if we suppose that the quantum Markov chain we defined is completely decoherent, $p=1$, we have $$\Phi_{n}(\rho)=\sum_{i=1}^{m}A_{i}U_{n}\rho U_{n}^{*}A_{i}^{*}$$ and $$P_n(i,j)=Tr\Big(|j\rangle\langle j|\Phi_{n}|i\rangle\langle i|\Big)$$ where $\rho$ is any $m$ by $m$ density matrix, and $i,j = 1, ..., m$. Using the fact that ${|k\rangle, k=1,2,...}$ are orthonormal basis, we obtain 

$$P_n(i,j)=\sum_{k=1}^{m}Tr\Big(|j\rangle\langle j|k\rangle\langle k|U_n|i\rangle\langle i|U_n^*|k\rangle\langle k|\Big)$$	$$=\sum_{k=1}^{m}Tr\Big(\langle j|k\rangle\langle k|U_n|i\rangle\langle i|U_n^*|k\rangle\langle k|j\rangle)\Big)$$
$$=Tr\Big(\langle j|U_n|i\rangle\langle i|U_n^*|j\rangle\Big)=U_n(j,i)U_n^*(i,j)=|U_n(j,i)|^2$$

If $\zeta=0$ and $p=1$, then it reduces to time-homogeneous Markov chain and it is well known that $\rho_n\rightarrow\rho_{\infty}$ when $n\rightarrow \infty$ by the Ergodic Theorem for finite-state Markov chains. Our focus now will be for the case of $\zeta>0$ or $0<p<1$.

By Kolomogorov $0$-$1$ law, for $0<p\leq 1$, there exists an infinite sequence of measurements when $t\rightarrow\infty$. Let $X_1,X_2,...$ be  the outcomes of the measurements. In the following Proposition \ref{markovian}, we will show that given the measurement times, $\{X_n\}_{n=1}^{\infty}$ is a time-inhomogeneous Markov chain.

\begin{definition}\label{MarkovQ}
	Let $\{T_n\}_{n=1}^{\infty}$ be i.i.d. geometric random variables with mean $\frac{1}{p}$.  Let $\sigma_0=0$ and $\sigma_n=T_1+\cdots+T_n$.  For $n \ge 1$, we define a Markov transition matrix
	\begin{eqnarray}
	Q_{\sigma_{n-1}}(i,j):=\big|\langle j| U_{\sigma_{n-1}+T_n}\cdots U_{\sigma_{n-1}+1} |i\rangle\big|^2
	\end{eqnarray}
\end{definition}	

%Note that we can write $\Phi_{n}=(1-p)\Phi_{n_0}+p\Phi_{n_1}$, where $\Phi_{n_0}=U_n\rho U_n^*$  is the unitary part of the quantum operation,and $\displaystyle\Phi_{n_1}=\sum_{i=1}^{2}|i\rangle \langle i|U_{n}\rho U_{n}^{*}|i\rangle \langle i|$ is the markovian part of the quantum operation. And we have a more general version of the previous proposition.
%\begin{prop}
%	Suppose that $\displaystyle\prod_{k=1}^{\infty}c(\Phi_{n_1})=0$ with $\displaystyle c(P^*)=\max_{x,y}\frac{1}{2}\sum_{z}|P(z,x)-P(z,y)|$, show that $\Phi_{n}\Phi_{n_1}(\rho)\rightarrow\rho_{\infty}$ for all $0<p\leq1$. (This proposition is not useful for now)
%\end{prop}

\begin{proposition}\label{markovian}
	Let $\{T_n\}_{n=1}^{\infty}$ be i.i.d. geometric random variables with mean $\frac{1}{p}$, and let $\sigma_0=0$ and $\sigma_n=T_1+\cdots+T_n$. Let $X_1, X_2, ...$ be the outcomes of the measurements at measurement times $\sigma_0,  \sigma_1, ... $. If  $\rho_{0}=|i\rangle\langle i|$,  then 
	\begin{enumerate}
		\item[(a)] $\displaystyle P_{\sigma}\big(X_1=i_1,X_2=i_2,...,X_n=i_n\big)=\big|\langle i_n| U_{\sigma_{n}}\cdots U_{\sigma_{n-1}+1} |i_{n-1}\rangle\big|^2\cdots$
		$$\cdots\big|\langle i_1| U_{\sigma_{1}}\cdots U_{1} |i\rangle\big|^2$$
		\item[(b)] $\displaystyle P_{\sigma}\big(X_n=i_n |X_1=i_1,X_2=i_2,...,X_{n-1}=i_{n-1}\big)=P_{\sigma}\big(X_n=i_n|X_{n-1}=i_{n-1}\big)$
		$$=Q_{\sigma_{n-1}}\big(i_{n-1},i_n\big)$$
	\end{enumerate}	
Here $i_0=i$ and $P_{\sigma}$ denotes the conditional probability given $\{\sigma_1, \sigma_2, ...\}$.
\end{proposition}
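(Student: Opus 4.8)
The plan is to \emph{unravel} the completely positive evolution $\Phi_t\cdots\Phi_1$ into classical measurement trajectories, and then identify $P_\sigma$ with the trajectory distribution obtained after conditioning on the measurement times. First I would write $\Phi_n=(1-p)\,\mathcal{U}_n+p\,\mathcal{M}_n$, where $\mathcal{U}_n(\rho)=U_n\rho U_n^{*}$ and $\mathcal{M}_n(\rho)=\sum_{k=1}^m|k\rangle\langle k|U_n\rho U_n^{*}|k\rangle\langle k|$ is the post-measurement channel at step $n$ (this is just $A_0\rho A_0^{*}=(1-p)U_n\rho U_n^{*}$ together with $\sum_{i\ge 1}A_i U_n\rho U_n^{*}A_i^{*}=p\,\mathcal{M}_n(\rho)$). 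Expanding $\rho_t=\Phi_t\cdots\Phi_1(\rho_0)$ over the $2^t$ choices of branch at each step gives $\rho_t=\sum_{S\subseteq\{1,\dots,t\}}p^{|S|}(1-p)^{t-|S|}\,\rho_t^{S}$, where $\rho_t^{S}$ is the composition inserting $\mathcal{M}_n$ at the steps $n\in S$ and $\mathcal{U}_n$ elsewhere. Since $p^{|S|}(1-p)^{t-|S|}$ is exactly the probability that the measurement times inside $\{1,\dots,t\}$ are the set $S$, the density operator $\rho_t^{S}$ is the state conditioned on $\{\sigma_1,\sigma_2,\dots\}$ meeting $[1,t]$ in $S$; in particular, for $t\ge\sigma_n$ the conditional law $P_\sigma$ of $(X_1,\dots,X_n)$ is obtained from the rank-one unraveling of $\rho_t^{\{\sigma_1<\sigma_2<\cdots\}\cap[1,t]}$ via the Born rule.

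Next I would run an induction on $n$ establishing, conditionally on $\sigma$ and for $t\ge\sigma_n$: (i) $P_\sigma(X_1=i_1,\dots,X_n=i_n)=\prod_{j=1}^n Q_{\sigma_{j-1}}(i_{j-1},i_j)$ with $i_0=i$; and (ii) conditionally on $\{X_1=i_1,\dots,X_n=i_n\}$ the system occupies the pure state $|i_n\rangle\langle i_n|$ at time $\sigma_n$. The base case $n=0$ is $\rho_0=|i\rangle\langle i|$ and the empty product $1$. For the inductive step, between times $\sigma_{n-1}$ and $\sigma_n$ only the $\mathcal{U}$ branches occur (forced by conditioning on $\sigma$), so starting from $|i_{n-1}\rangle\langle i_{n-1}|$ the state just before the measurement at step $\sigma_n$ is the pure state $U_{\sigma_n}\cdots U_{\sigma_{n-1}+1}|i_{n-1}\rangle$; applying $\mathcal{M}_{\sigma_n}$ and reading its $k$-th rank-one term as the event $\{X_n=k\}$ yields outcome $i_n$ with conditional probability $|\langle i_n|U_{\sigma_n}\cdots U_{\sigma_{n-1}+1}|i_{n-1}\rangle|^2=Q_{\sigma_{n-1}}(i_{n-1},i_n)$ and collapses the state to $|i_n\rangle\langle i_n|$. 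Multiplying by the inductive hypothesis gives (i) and (ii) for $n$, which is exactly part (a) (the product is written there in reverse order, using $U_{\sigma_1}\cdots U_1=U_{\sigma_1}\cdots U_{\sigma_0+1}$).

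Part (b) then follows by elementary manipulation of (a). Dividing the length-$n$ product by the length-$(n-1)$ product gives $P_\sigma(X_n=i_n\mid X_1=i_1,\dots,X_{n-1}=i_{n-1})=Q_{\sigma_{n-1}}(i_{n-1},i_n)$, which depends on the past only through $i_{n-1}$. Summing (a) over $i_1,\dots,i_{n-2}$ and using that each $Q_{\sigma_{j-1}}$ is stochastic (its $\ell$-th row sums to $\sum_k|\langle k|U_{\sigma_j}\cdots U_{\sigma_{j-1}+1}|\ell\rangle|^2=1$ by unitarity) shows $P_\sigma(X_n=i_n\mid X_{n-1}=i_{n-1})$ equals the same quantity; hence the two conditional probabilities coincide and $\{X_n\}$ is a time-inhomogeneous Markov chain with transition matrices $Q_{\sigma_{n-1}}$.

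The step I expect to be the main obstacle is not a computation but the bookkeeping in the first paragraph: making precise that the probability space carrying $(T_n)$ together with the measurement outcomes is exactly the one obtained by unravelling $\Phi_t\cdots\Phi_1$, so that ``conditioning on $\sigma$'' legitimately amounts to fixing the measurement pattern on $[1,t]$ and normalizing the corresponding branch, and that the answer is independent of the auxiliary choice of $t\ge\sigma_n$ (because the trailing $\mathcal{U}$ operators are trace preserving). Everything downstream — the telescoping product in the induction and the stochasticity argument for (b) — is routine.
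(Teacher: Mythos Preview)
Your proposal is correct and follows essentially the same unraveling strategy as the paper: both decompose the evolution over the measurement/no-measurement branches at each step and identify the Bernoulli weights $p^{|S|}(1-p)^{t-|S|}$ with the law of the geometric measurement times, then read off the conditional product formula. The only cosmetic difference is that the paper expands $\mathrm{Tr}(|j\rangle\langle j|\rho_t)$ directly over all Kraus indices $j_1,\dots,j_t\in\{0,1,\dots,m\}$ and then matches terms with $E[P_\sigma(\cdot)]$, whereas you organize the same computation as an induction on $n$ carrying the pure-state invariant $|i_{n-1}\rangle\langle i_{n-1}|$ at time $\sigma_{n-1}$; part (b) is handled identically in both.
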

\noindent
\begin{proof}[Proof of Proposition \ref{markovian}]
Let $\displaystyle\Phi_{k}(\rho)=\sum_{i=0}^{m}A_iU_k\rho U_k^*A_i$, $\rho\in \mathcal{E}$, where $\mathcal{E}=$the set of all density operators on $H$. The quantum operation of the partially decoherent quantum process at step $k$ is given by $\Phi_{k}$.

Suppose that the initial state is $\rho_0=|i\rangle\langle i|$. Then the state at time $t$ is $$\rho_t=\Phi_{t}\cdots\Phi_{1}(\rho_0)$$
The probability at time $t$, the system is found at state $|j\rangle$ is given by
$$Tr(|j\rangle\langle j|\rho_t)=\langle j|\sum_{j_t=0}^{m}\cdots\sum_{j_1=0}^{m}A_{j_t}U_t\cdots A_{j_2}U_2A_{j_1}U_1|i\rangle \langle i|U_1^*A_{j_1}^*\cdots U_t^*A_{j_t}^*|j\rangle$$
$$=\sum_{j_t=0}^{m}\cdots\sum_{j_1=0}^{m}\langle j|A_{j_t}U_t\cdots A_{j_2}U_2A_{j_1}U_1|i\rangle \langle i|U_1^*A_{j_1}^*\cdots U_t^*A_{j_t}^*|j\rangle$$
$$=\sum_{j_t,...,j_1=0}^{m}\big|\langle j|A_{j_t}U_t\cdots A_{j_2}U_2A_{j_1}U_1|i\rangle\big|^2$$
In each term of above sum, let $0<\sigma_1<\sigma_2<...<\sigma_n\leq t$ be exactly the times that $j_{\sigma_k}=1$, $2$, ...,  or $m$, and $j_s=0$ for all $s\neq \sigma_1,...,\sigma_{n}$. $0<\sigma_1<\sigma_2<...<\sigma_n$ are called the decoherence times, and we put $i_k=j_{\sigma_{k}}$. So, the sum can be written as,
$$Tr(|j\rangle\langle j|\rho_t)=\sum_{n=0}^{\infty}\text{ }\sum_{0=\sigma_{0}<...<\sigma_{n}\leq t}p^{n}q^{t-n}\sum_{i_n=1}^{m}\cdots\sum_{i_1=1}^{m}\big|\langle j|U_t\cdots U_{\sigma_{n}+1}|i_n\rangle\big|^2\cdot$$
\begin{eqnarray}\label{pathIntegralOri}
\cdot\big|\langle i_n|U_{\sigma_{n}}\cdots U_{\sigma_{n-1}+1}|i_{n-1}\rangle\big|^2\cdots\big|\langle i_1|U_{\sigma_{1}}\cdots U_{1}|i\rangle\big|^2
\end{eqnarray}

Now, we prove Part (a) of the proposition.
$$P\big(X_1=i_1,...,X_n=i_n\big)=\sum_{t=1}^{\infty}P\big(X_1=i_1,...,X_n=i_n,\sigma_n=t\big)$$
%$$=\sum_{t=1}^{\infty}\text{ }\sum_{0<\sigma_1<\cdots<\sigma_n= t}P_{\sigma_{1},...,\sigma_{n},t}\big(i_1,...,i_n,j\big) \text{ where } t=\sigma_n \text{ and } j=i_n$$
By Equation (\ref{pathIntegralOri}), we have 
$$=\sum_{t=1}^{\infty}\text{ }\sum_{0<\sigma_1<\cdots<\sigma_n= t}p^{n}q^{t-n}\big|\langle i_n|U_{\sigma_{n}}\cdots U_{\sigma_{n-1}+1}|i_{n-1}\rangle\big|^2\cdots\big|\langle i_1|U_{\sigma_{1}}\cdots U_{1}|i\rangle\big|^2$$
%$$\sum_{0=\sigma_{0}<...<\sigma_{n}\leq t}p^{n}q^{t-n}\sum_{i_n=1}^{m}\cdots\sum_{i_1=1}^{m}\big|\langle j|U_t\cdots U_{\sigma_{n}+1}|i_n\rangle\big|^2\cdot$$
%$$\cdot\big|\langle i_n|U_{\sigma_{n}}\cdots U_{\sigma_{n-1}+1}|i_{n-1}\rangle\big|^2\cdots\big|\langle i_1|U_{\sigma_{1}}\cdots U_{1}|i\rangle\big|^2$$
$$=\sum_{0<\sigma_1<\cdots<\sigma_n < \infty}pq^{\sigma_1-\sigma_{0}-1}pq^{\sigma_2-\sigma_{1}-1}...pq^{\sigma_n-\sigma_{n-1}-1}\big|\langle i_n|U_{\sigma_{n}}\cdots U_{\sigma_{n-1}+1}|i_{n-1}\rangle\big|^2\cdots\big|\langle i_1|U_{\sigma_{1}}\cdots U_{1}|i\rangle\big|^2.$$

On the other hand, 
$$P\big(X_1=i_1,...,X_n=i_n\big)=E\Big[P_{\sigma}(X_1=i_1, ..., X_n=i_n)\Big],$$
and $$P(T_1=\sigma_1, T_1+T_2=\sigma_2, ..., T_1+...+T_n=\sigma_n)=pq^{\sigma_1-\sigma_{0}-1}pq^{\sigma_2-\sigma_{1}-1}...pq^{\sigma_n-\sigma_{n-1}-1}.$$
Therefore, we have 
$$P_{\sigma}(X_1=i_1, ..., X_n=i_n)=\big|\langle i_n|U_{\sigma_{n}}\cdots U_{\sigma_{n-1}+1}|i_{n-1}\rangle\big|^2\cdots\big|\langle i_1|U_{\sigma_{1}}\cdots U_{1}|i\rangle\big|^2$$

To prove (b), by  (a) and definition, we have 
$$P_{\sigma}\big(X_1=i_1,X_2=i_2,...,X_n=i_n\big)=Q_{\sigma_0} (i,i_1)\cdots Q_{\sigma_{n-1}}(i_{n-1},i_n)$$
Then,
$$P_{\sigma}\big(X_n=i_n|X_1=i_1,...,X_{n-1}=i_{n-1}\big)=\frac{P_{\sigma}\big(X_1=i_1,...,X_{n-1}=i_{n-1},X_n=i_n\big)}{P_{\sigma} \big(X_1=i_1,...,X_{n-1}=i_{n-1}\big)}$$
$$=Q_{\sigma_{n-1}}(i_{n-1},i_n)$$
We also note that
$$P_{\sigma }\big(X_n=i_n|X_{n-1}=i_{n-1}\big)=\frac{P_{\sigma}\big(X_n=i_n,X_{n-1}=i_{n-1}\big)}{P_{\sigma}\big(X_{n-1}=i_{n-1}\big)}$$
$$=\frac{\sum_{i_1,...,i_{n-2}=1}^{m}Q_{\sigma_0}(i,i_1)\cdots Q_{\sigma_{n-2}}(i_{n-2},i_{n-1})Q_{\sigma_{n-1}}(i_{n-1},i_n)}{\sum_{i_1,...,i_{n-2}=1}^{m}Q_{\sigma_0}(i,i_1)\cdots Q_{\sigma_{n-2}}(i_{n-2},i_{n-1})}=Q_{\sigma_{n-1}}(i_{n-1},i_n) $$
\end{proof}

\begin{remark} 
\begin{enumerate}
	\item[(a)] It follows from Proposition \ref{markovian} (b) that given measurement times $\sigma$, $X_1, X_2,...$ is a time in-homogeneous Markov chain with transition probability
	$$P_{\sigma}\big(X_n=j|X_{n-1}=i\big)=Q_{\sigma_{n-1}}(i,j),\ 1\leq i,j\leq m,$$
	and this is illustrated in the following diagram
	
	\begin{tikzpicture}[%
	every node/.style={
		font=\scriptsize,
		% Better alignment, see https://tex.stackexchange.com/questions/315075
		text height=1.5ex,
		text depth=0.25ex,
	},
	]\label{timediagram}
	% draw horizontal line   
	\draw[->] (0,0) -- (13,0);
	
	% draw vertical lines
	\foreach \x in {0,1,2,4,8,10,11,12}{\draw (\x cm,3pt) -- (\x cm,-3pt);}
	
	% place axis labels
	\node[anchor=north] at (0,0) {$0$};
	\node[anchor=north] at (1,0) {$1$};
	\node[anchor=north] at (1.5,0) {...};
	\node[anchor=north] at (2,0) {$\sigma_1$};
	\node[anchor=north] at (2,-.4) {$\uparrow$};
	\node[anchor=north] at (2,-1) {\large$X_1$};
	\node[anchor=north] at (3,0) {...};
	\node[anchor=north] at (4,0) {$\sigma_2$};
	\node[anchor=north] at (4,-.4) {$\uparrow$};
	\node[anchor=north] at (4,-1) {\large$X_2$};
	\node[anchor=north] at (6,0) {...};
	
	\node[anchor=north] at (8,0) {$\sigma_{n-1}$};
	\node[anchor=north] at (8,-.4) {$\uparrow$};
	\node[anchor=north] at (8,-1) {\large$X_{n-1}$};
	\node[anchor=north] at (10,0) {$\sigma_n$};
	\node[anchor=north] at (10,-.4) {$\uparrow$};
	\node[anchor=north] at (10,-1) {\large$X_{n}$};
	\node[anchor=north] at (9,0) {...};
	\node[anchor=north] at (11,0) {$t$};
	
	\node[anchor=north] at (12,0) {$\sigma_{n+1}$};
	\node[anchor=north] at (12,-.4) {$\uparrow$};
	\node[anchor=north] at (12,-1) {\large$X_{n+1}$};
	\node[anchor=north] at (13,0) {time};

	% draw curly braces and add their labels
	\draw[decorate,decoration={brace,amplitude=5pt}] (0,.5) -- (2,.5)
	node[anchor=south,midway,above=4pt] {$T_1$};
	
	\draw[decorate,decoration={brace,amplitude=5pt}] (2,.5) -- (4,.5)
	node[anchor=south,midway,above=4pt] {$T_2$};
	
	\draw[decorate,decoration={brace,amplitude=5pt}] (8,.5) -- (10,.5)
	node[anchor=south,midway,above=4pt] {$T_n$};
	
	\draw[decorate,decoration={brace,amplitude=5pt}] (10,.5) -- (12,.5)
	node[anchor=south,midway,above=4pt] {$T_{n+1}$};
	\end{tikzpicture}
	
	\item[(b)] This proposition gives a probabilistic interpretation of decoherence time $0<\sigma_{0}<\sigma_{1}<...<\sigma_{n}<...$ as a sequence of arrival times of independence Bernoulli trials with success probability $p$, and $\{X_n\}_{n=1}^\infty$ can be viewed as discrete version of a compound Poisson process.
	
	\end{enumerate}

\end{remark}

	For $\sigma_{n}\leq t$, and from Equation (\ref{pathIntegralOri}), we obtain an expression for the probability that the system is found at state $j$ with exact decoherent time $\sigma_{1},...\sigma_{n}$ and outcomes $i_1, i_2,...,i_n$. We will formualte this as our main representation theorem as follows.

\begin{definition}\label{pathIntegral}

Let $t \ge 0$ and $n_t=\max\{n \ge 0; \sigma _n \le t \} $ be the number of occurences  before or at time $t$. Let 
$$W_{\sigma _{n_t}}(i, j)=\big|\langle j|U_t\cdots U_{\sigma_{n_t-1}+1}|i\rangle\big|^2$$

%
%	The probability that the system is found at state $j$ with exact decoherent time $\sigma_{1},...,\sigma_{n}$ and outcomes $i_1, i_2,...,i_n$, in this case $X_1=i_1,...X_n=i_n$ if $\sigma_{n}\leq t$, for the decoherent quantum random walk for $0<p\leq 1$ can be written as the path integral formula:
%	$$P_{\sigma_{1},\sigma_{2},...,\sigma_{n},t}\big(i_1,i_2,...,i_n,j\big):=p^nq^{t-n}\big|\langle j|U_t\cdots U_{\sigma_{n}+1}|i_n\rangle\big|^2\cdots
%	\big|\langle i_1|U_{\sigma_{1}}\cdots U_{1}|i\rangle\big|^2$$	
\end{definition}

From Equation (\ref{pathIntegralOri})  and Definition \ref{pathIntegral}, we have the following
\begin{theorem}[Compound Markov Chain Representation]\label{th:Compond Markov Chain}
Suppose $\rho_{0}=|i\rangle\langle i|$. Then

$$Tr(|j\rangle\langle j|\rho_t)=E[Q_{\sigma_0}Q_{\sigma_1}...Q_{\sigma_{n_t-1}}W_{\sigma_{n_t}}(i,j)]$$

%$$Tr(|j\rangle\langle j|\rho_t)=\sum_{n=0}^{\infty}\text{ }\sum_{0=\sigma_{0}<...<\sigma_{n}\leq t}p^{n}q^{t-n}\sum_{i_n=1}^{m}\cdots\sum_{i_1=1}^{m}\big|\langle j|U_t\cdots U_{\sigma_{n}+1}|i_n\rangle\big|^2\cdot$$
%\begin{eqnarray}\label{representation}
%\cdot\big|\langle i_n|U_{\sigma_{n}}\cdots U_{\sigma_{n-1}+1}|i_{n-1}\rangle\big|^2\cdots\big|\langle i_1|U_{\sigma_{1}}\cdots U_{1}|i\rangle\big|^2
%\end{eqnarray}

\end{theorem}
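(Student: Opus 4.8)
The plan is to combine the explicit sum from Equation (\ref{pathIntegralOri}) with the probabilistic interpretation of the decoherence times supplied by Definition \ref{MarkovQ} and Proposition \ref{markovian}. Starting from (\ref{pathIntegralOri}), I would regroup the double sum — over the number $n$ of decoherences before or at time $t$ and over the ordered decoherence times $0=\sigma_0<\sigma_1<\dots<\sigma_n\le t$ — and recognize the geometric weight $p^{n}q^{t-n}$ factored as $p\,q^{\sigma_1-\sigma_0-1}\cdots p\,q^{\sigma_n-\sigma_{n-1}-1}\cdot q^{t-\sigma_n}$. The first $n$ factors are exactly $P(T_1=\sigma_1-\sigma_0,\dots,T_n=\sigma_n-\sigma_{n-1})$, i.e. the joint law of the i.i.d.\ geometric inter-arrival times up to the event $\{\sigma_n\le t<\sigma_{n+1}\}$; the trailing factor $q^{t-\sigma_n}$ accounts for no further decoherence strictly between $\sigma_n$ and $t+1$, which is precisely the event $\{n_t=n\}$ once $\sigma_n$ is fixed at a value $\le t$.

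With that bookkeeping in place I would identify the remaining amplitude factors with the matrices introduced earlier. For consecutive decoherence times the factor $\big|\langle i_k|U_{\sigma_k}\cdots U_{\sigma_{k-1}+1}|i_{k-1}\rangle\big|^2$ is by Definition \ref{MarkovQ} exactly $Q_{\sigma_{k-1}}(i_{k-1},i_k)$, and the final factor $\big|\langle j|U_t\cdots U_{\sigma_{n_t-1}+1}|i_{n_t}\rangle\big|^2$ — wait, more precisely $\big|\langle j|U_t\cdots U_{\sigma_n+1}|i_n\rangle\big|^2$ — is $W_{\sigma_{n_t}}(i_n,j)$ by Definition \ref{pathIntegral}. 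Summing over the internal indices $i_1,\dots,i_n\in\{1,\dots,m\}$ turns the product of these factors into the matrix product $\big(Q_{\sigma_0}Q_{\sigma_1}\cdots Q_{\sigma_{n-1}}W_{\sigma_n}\big)(i,j)$. Hence each term of (\ref{pathIntegralOri}), after the regrouping, reads
$$\sum_{n\ge 0}\ \sum_{0<\sigma_1<\dots<\sigma_n\le t} P(T_1=\sigma_1,\dots,T_n=\sigma_n-\sigma_{n-1})\,\mathbf{1}\{n_t=n\}\,\big(Q_{\sigma_0}\cdots Q_{\sigma_{n-1}}W_{\sigma_n}\big)(i,j),$$
which is manifestly $E\big[(Q_{\sigma_0}Q_{\sigma_1}\cdots Q_{\sigma_{n_t-1}}W_{\sigma_{n_t}})(i,j)\big]$, the expectation being over the geometric clock $\{T_k\}$ that determines $\{\sigma_k\}$ and $n_t$.

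The one genuinely delicate point is the treatment of the random upper index $n_t$ and the boundary/degenerate cases, so I would handle those carefully rather than sweeping them under the rug: when $n_t=0$ there are no decoherences in $[1,t]$, the chain has not moved, and the formula must collapse to $W_{\sigma_0}(i,j)=\big|\langle j|U_t\cdots U_1|i\rangle\big|^2$, which is consistent with the $n=0$ term of (\ref{pathIntegralOri}); and I should check that the event $\{n_t=n\}$ depends only on $\sigma_1,\dots,\sigma_n$ and $T_{n+1}$ in a way that, after summing over $T_{n+1}$ (and all later $T_k$, which are irrelevant), produces exactly the $q^{t-\sigma_n}$ factor and nothing else — so that the finite sum over $\{\sigma_j\le t\}$ really is the full expectation with no missing mass. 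Everything else is the routine index-chasing of converting $|\langle\cdot|\prod U|\cdot\rangle|^2$ strings into products of the $Q$ and $W$ matrices, which I would state once and not belabor. Finally I would remark that convergence of the outer sum over $n$ is automatic since for each fixed $t$ only finitely many $n$ (namely $n\le t$) contribute, the chain being on a finite state space with at most one decoherence per time step.
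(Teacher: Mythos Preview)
Your proposal is correct and follows essentially the same approach as the paper: both arguments identify the weight $p^{n}q^{t-n}$ with the joint law of the geometric inter-arrival times together with the event $\{n_t=n\}$ (the paper does this by summing the extra factor $p\,q^{\sigma_{n+1}-t-1}$ over $\sigma_{n+1}>t$, you do it by reading $q^{t-\sigma_n}$ as $P(T_{n+1}>t-\sigma_n)$), and then match the amplitude strings with the $Q_{\sigma_{k-1}}$ and $W_{\sigma_{n_t}}$ entries. The only cosmetic difference is that the paper expands the expectation and shows it equals (\ref{pathIntegralOri}), whereas you start from (\ref{pathIntegralOri}) and regroup into an expectation.
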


Note that in the above theorem, the matrices $Q$  and $W$ are pure quantum random walks. In between decohenernces times, the successive processes are coherent quantum walks. 
%The path integral formula defined in Definition \ref{pathIntegral} 
The Compound Markov Chain Representation Theorem \ref{th:Compond Markov Chain} not only gives a very intuitive expression  for the $0 < p \le 1$  decoherent quantum walks with the probability of a specific path from the initial state to the target state, but also provides a very useful tool for generating numerical simulations, as we shall see in Section \ref{sec:simulations}.

\begin{proof}[Proof of Theorem \ref{th:Compond Markov Chain}] The right hand side of the theorem, 

 $$ E[Q_{\sigma_0}Q_{\sigma_1}...Q_{\sigma_{n_t-1}}W_{\sigma_{n_t}}(i,j)] $$
$$=
\sum_{n=0}^{\infty}\text{ }\sum_{0=\sigma_{0}<...<\sigma_{n}\leq t < \sigma_{n+1}}P(T_1=\sigma_1, T_1+T_2=\sigma_2, ..., T_1+...+T_n =\sigma_n, T_1+...+T_{n+1}=\sigma_{n+1}) $$
$$\sum_{i_n=1}^{m}\cdots\sum_{i_1=1}^{m}\big|\langle j|U_t\cdots U_{\sigma_{n}+1}|i_n\rangle\big|^2
\cdot\big|\langle i_n|U_{\sigma_{n}}\cdots U_{\sigma_{n-1}+1}|i_{n-1}\rangle\big|^2\cdots\big|\langle i_1|U_{\sigma_{1}}\cdots U_{1}|i\rangle\big|^2 $$
$$=
\sum_{n=0}^{\infty}\text{ }\sum_{0=\sigma_{0}<...<\sigma_{n}\leq t < \sigma_{n+1}}p^{n}q^{t-n}q^{\sigma_{n+1}-t-1}p\sum_{i_n=1}^{m}\cdots\sum_{i_1=1}^{m}\big|\langle j|U_t\cdots U_{\sigma_{n}+1}|i_n\rangle\big|^2
\cdot$$
$$\big|\langle i_n|U_{\sigma_{n}}\cdots U_{\sigma_{n-1}+1}|i_{n-1}\rangle\big|^2\cdots
\big|\langle i_1|U_{\sigma_{1}}\cdots U_{1}|i\rangle\big|^2 $$
$$=
\sum_{n=0}^{\infty}\text{ }\sum_{0=\sigma_{0}<...<\sigma_{n}\leq t}p^{n}q^{t-n}\sum_{i_n=1}^{m}\cdots\sum_{i_1=1}^{m}\big|\langle j|U_t\cdots U_{\sigma_{n}+1}|i_n\rangle\big|^2
\cdot$$
$$\big|\langle i_n|U_{\sigma_{n}}\cdots U_{\sigma_{n-1}+1}|i_{n-1}\rangle\big|^2\cdots
\big|\langle i_1|U_{\sigma_{1}}\cdots U_{1}|i\rangle\big|^2 $$
$$= Tr(|j\rangle\langle j|\rho_t),$$
by Equation (\ref{pathIntegralOri}).

\end{proof}

\section{Convergence to equilibrium}\label{sec:convergence to equilibrium}

We first show a limiting theorem for time-inhomogeneous Markov chains in classical probability as follows.

%If we look at the proof of this theorem, we can generalize the technique to this general theorem in classical probability.  
\begin{theorem}\label{ClassicConver}
	Let $\Pi_{ij}=\Pi_{1j} \ge 0$ for all $i, j$ and $\sum_{j}\Pi_{ij}=1$. Let $P_k$ be the Markov transition matrices on a finite states space $\Sigma=\{1, 2, ..., m\}$. Suppose there is a $k_0$ such that for all $k \ge k_0$, $\Pi P_k=
	%P_k\Pi=
	\Pi$ and $P_k\geq\delta_k\Pi$, where $0 \le \delta_k \le 1$. If $\displaystyle\prod_{k=k_0}^{\infty}(1-\delta_k)=0$, then $P_s\cdots P_n\rightarrow\Pi$ as $n\rightarrow\infty$, for all $1 \le s \le k_0$.
\end{theorem}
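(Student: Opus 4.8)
The plan is to run a Dobrushin‑type contraction argument that exploits the special structure of $\Pi$. Since every row of $\Pi$ equals the fixed probability vector $\pi=(\Pi_{11},\dots,\Pi_{1m})$, one has $\Pi^{2}=\Pi$ and $M\Pi=\Pi$ for every stochastic matrix $M$; thus the only substantive hypothesis is $\Pi P_k=\Pi$ (equivalently $\pi P_k=\pi$) for $k\ge k_0$. First I would record the splitting, valid for $k\ge k_0$,
$$P_k=\delta_k\Pi+(1-\delta_k)R_k,\qquad R_k:=\tfrac{1}{1-\delta_k}\big(P_k-\delta_k\Pi\big)\ \ (\delta_k<1),\quad R_k:=\Pi\ \ (\delta_k=1),$$
noting that when $\delta_k<1$ the matrix $R_k$ is genuinely stochastic: the inequality $P_k\ge\delta_k\Pi$ gives nonnegativity, and each row sum is $(1-\delta_k)/(1-\delta_k)=1$; when $\delta_k=1$ the inequality together with equality of row sums forces $P_k=\Pi$, so the splitting still holds.

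Next, for $n\ge k_0$ set $Q_n:=P_{k_0}P_{k_0+1}\cdots P_n$, and prove by induction on $n$ that
$$Q_n-\Pi=\Big(\prod_{k=k_0}^{n}(1-\delta_k)\Big)(R_{k_0}-\Pi)R_{k_0+1}\cdots R_n.$$
The base case $n=k_0$ is exactly $P_{k_0}-\Pi=(1-\delta_{k_0})(R_{k_0}-\Pi)$. For the step, using $\Pi P_n=\Pi$ we get $Q_n-\Pi=Q_{n-1}P_n-\Pi P_n=(Q_{n-1}-\Pi)P_n$; since $Q_{n-1}$ is a product of stochastic matrices it is stochastic, so $Q_{n-1}\Pi=\Pi=\Pi^{2}$ and hence $(Q_{n-1}-\Pi)\Pi=0$, which yields $(Q_{n-1}-\Pi)P_n=(Q_{n-1}-\Pi)\big(\delta_n\Pi+(1-\delta_n)R_n\big)=(1-\delta_n)(Q_{n-1}-\Pi)R_n$. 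Combined with the inductive hypothesis this closes the induction.

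To finish, observe that $R_{k_0}R_{k_0+1}\cdots R_n$ and $\Pi R_{k_0+1}\cdots R_n$ are each products of stochastic matrices, hence stochastic, so every entry of $(R_{k_0}-\Pi)R_{k_0+1}\cdots R_n$ lies in $[-1,1]$; therefore $\max_{i,j}\big|Q_n(i,j)-\Pi_{ij}\big|\le\prod_{k=k_0}^{n}(1-\delta_k)$, which is a nonincreasing sequence with limit $\prod_{k=k_0}^{\infty}(1-\delta_k)=0$. Thus $P_{k_0}\cdots P_n\to\Pi$. For a general $s$ with $1\le s\le k_0$, write $P_s\cdots P_n=(P_s\cdots P_{k_0-1})Q_n$ (with an empty, i.e. identity, prefactor when $s=k_0$); the prefactor $S:=P_s\cdots P_{k_0-1}$ is a fixed stochastic matrix independent of $n$, so $P_s\cdots P_n=SQ_n\to S\Pi=\Pi$, completing the proof.

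I do not expect a genuine obstacle here: the entire argument rests on the three absorption identities $\Pi P_k=\Pi$ (for $k\ge k_0$), $(\text{stochastic})\cdot\Pi=\Pi$, and $\Pi^{2}=\Pi$, which force the error $Q_n-\Pi$ to telescope into a scalar multiple of a uniformly bounded matrix. The only points needing a little care are the degenerate case $\delta_k=1$ and verifying that the constant in the bound is truly $n$‑independent (both handled by stochasticity above); identifying the right splitting $P_k=\delta_k\Pi+(1-\delta_k)R_k$ is the one genuine idea in the proof.
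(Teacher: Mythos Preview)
Your proof is correct and follows essentially the same approach as the paper: both use the splitting $P_k=\delta_k\Pi+(1-\delta_k)R_k$ with stochastic $R_k$, then prove by induction that $P_{k_0}\cdots P_n-\Pi$ equals $\prod_{k=k_0}^{n}(1-\delta_k)$ times a matrix uniformly bounded in norm, using the absorption identities $\Pi^{2}=\Pi$, $(\text{stochastic})\cdot\Pi=\Pi$, and $\Pi R_k=\Pi$. Your induction via $(Q_{n-1}-\Pi)\Pi=0$ is a slightly cleaner bookkeeping than the paper's explicit expansion, but the content is the same.
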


\begin{proof}[Proof of Theorem \ref{ClassicConver}] Note that since $P_k$ are Markov transition matrices, if $P_k\geq\delta_k\Pi$, for some $\delta_k >0$, then $ \delta_k \le 1$. Moreover, if $\delta_k=1$, then $P_k=\Pi$. So if there is a $k \ge k_0$ such that $\delta_k=1$, then $P_1\cdots P_n=\Pi$, since $P_k\Pi=\Pi$ and by assumption $\Pi P_k=\Pi$. Therefore it is sufficient to consider the case $0 \le \delta_k <1$. 

We also note that since $P_k$ are Markov transition matrices, if
 $$P_{k_0}P_{k_0+1}\cdots P_n \rightarrow\Pi,$$ then $$P_s\cdots P_n\rightarrow\Pi,$$ as $n\rightarrow\infty$, for all $1 \le s \le k_0$.

For $n \ge k_0$, we have 
$$P_n=\alpha_n\Big(\frac{P_n-\delta_n\Pi}{\alpha_n}\Big)+\delta_n\Pi=\alpha_n\tilde{P}_n+\delta_n\Pi$$
Since $\alpha_n=1-\delta_n$ and $\tilde{P}_n$ is a Markov transition matrix, and by assumption, we have the following
\begin{eqnarray}
	 \alpha_n+\delta_n=1 
	\end{eqnarray}
	\begin{eqnarray}
	\Pi\cdot\Pi=\Pi
	\end{eqnarray}
	\begin{eqnarray}
	\displaystyle\tilde{P_n}\cdot\Pi=\frac{P_n-\delta_n\Pi}{\alpha_n}\Pi=\frac{P_n\Pi-\delta_n\Pi^2}{\alpha_n}=\frac{\Pi-\delta_n\Pi}{\alpha_n}=\frac{1-\delta_n}{\alpha_n}\Pi=\Pi 
	\end{eqnarray}
	\begin{eqnarray}
	\Pi\cdot\tilde{P_n}=\Pi 
	\end{eqnarray}
%	\begin{eqnarray}
%	 \tilde{Q} is doubly stochastic.
%	\end{eqnarray}

Therefore, we have 
$$P_{k_0}\cdots P_n=\prod_{k=k_0}^{n}\big[\alpha_n\tilde{P}_n+\delta_n\Pi\big]=\prod_{k=k_0}^{n}\alpha_k\tilde{P}_k+\big(1-\prod_{k=k_0}^{n}\alpha_k\big)\Pi$$
The last equality can be proved by induction, if $n=k_0+1$
$$\big[\alpha_{k_0}\tilde{P}_{k_0}+(1-\alpha_{k_0})\Pi\big]\cdot\big[\alpha_{k_0+1}\tilde{P}_{k_0+1}+(1-\alpha_{k_0+1})\Pi\big]$$
$$=\alpha_{k_0}\alpha_{k_0+1}\tilde{P}_{k_0}\tilde{P}_{k_0+1}+(1-\alpha_{k_0})\alpha_{k_0+1}\Pi\tilde{P}_{k_0+1}+\alpha_{k_0}(1-\alpha_{k_0+1})\tilde{P}_{k_0}\Pi+(1-\alpha_{k_0})(1-\alpha_{k_0+1})\Pi^2$$
$$=\alpha_{k_0}\alpha_{k_0+1}\tilde{P}_{k_0}\tilde{P}_{k_0+1}+\Pi-\alpha_{k_0}\alpha_{k_0+1}\Pi+\alpha_{k_0}\Pi-\alpha_{k_0}\alpha_{k_0+1}\Pi-\Pi+\alpha_{k_0}\alpha_{k_0+1}\Pi+\Pi-\alpha_{k_0}\Pi=$$
$$\alpha_{k_0}\alpha_{k_0+1}\tilde{P}_{k_0}\tilde{P}_{k_0+1}+(1-\alpha_{k_0}\alpha_{k_0+1})\Pi$$
And now, let's assume that the formula is true for $n$, and prove it for $n+1$,
$$\prod_{k=k_0}^{n+1}\big[\alpha_n\tilde{P}_n+\delta_n\Pi\big]=\Big[\prod_{k=k_0}^{n}\alpha_k\tilde{P}_k+\big(1-\prod_{k=k_0}^{n}\alpha_k\big)\Pi\Big]\Big[\alpha_{n+1}\tilde{P}_{n+1}+(1-\alpha_{n+1})\Pi\Big]$$
$$=\prod_{k=k_0}^{n+1}\alpha_k\tilde{P}_k+\prod_{k=k_0}^{n}\alpha_k(1-\alpha_{n+1})\Pi+\big(1-\prod_{k=k_0}^{n}\alpha_k\big)\alpha_{n+1}\Pi+\big(1-\prod_{k=k_0}^{n}\alpha_k\big)(1-\alpha_{n+1})\Pi$$
$$=\prod_{k=k_0}^{n+1}\alpha_k\tilde{P}_k+\prod_{k=k_0}^{n}\alpha_k\Pi-\prod_{k=1}^{n+1}\alpha_k\Pi+\alpha_{n+1}\Pi-\prod_{k=k_0}^{n+1}\alpha_k\Pi+\Pi-\alpha_{n+1}\Pi-\prod_{k=k_0}^{n}\alpha_k\Pi+\prod_{k=1}^{n+1}\alpha_k\Pi$$
$$=\prod_{k=k_0}^{n+1}\alpha_k\tilde{P}_k+\big(1-\prod_{k=k_0}^{n+1}\alpha_k\big)\Pi$$

For an $m \times m$ matrix $B$, we denote by  $|| B||_{\infty}$ the operator norm from $(\mathbb{C}^m, ||\cdot||_{\infty})$ with sup norm to itself. Note that if $P$ is a Markov transition matrix, then $||P||_{\infty} =1$.

%For a vector $v=(v_1, ..., v_m)$, we denote by $||v||_{\infty}=\sup _i |v_i|$ the sup norm of $v$. For an $m \times m$ matrix $B$, we denote by  $$|| B||_{\infty}=\sup _{ v \ne 0} \frac{||Bv||_{\infty} }{||v||_{\infty}}$$ the sup norm of $B$. 
%Note that 
%for any $m \times m$ matrix $B=(b_{ij})$, 
%\begin{eqnarray}
%||B||_{\infty} \le \sup_i\sum_j|b_{ij}| \label{B infinite norm}
%\end{eqnarray}
%and  for any $m \times m$ matrices $A $ and $B$, we have 
%\begin{eqnarray}
%||AB||_{\infty} \le ||A||_{\infty} \||B||_{\infty} \label{AB infinite norm}
%\end{eqnarray}
%Note that if $P$ is a Markov transition matrix, then $||P||_{\infty} = \sup_i\sum_jP_{ij}=1$.

Then we have 
$$
||\prod_{k=k_0}^{n}\alpha_n\tilde{P}_n||_{\infty} \le \prod_{k=k_0}^{n}\alpha_n
$$
We observe that if $\displaystyle \prod_{k=k_0}^{\infty}\alpha_k=0$, then $\displaystyle P_{k_0}P_{k_0+1}\cdots P_n\rightarrow\Pi$, and we have obtained  the conclusion.
\end{proof}

We will need the following estimates for a unitary semigroup.
\begin{lemma}\label{le:lower bound semigroup}
Let $G$ be an $m\times m$ self-adjoint  matrix, and $U=e^{i\theta G}$ where  $\theta$ is a  non-negative real number. Suppose there exists a constant $\epsilon_0 >0$ such that $|G_{jk}|  \ge \epsilon_0$, for all $j,k$. If $\theta \le \frac{ \epsilon_0}{4||G||_{\infty}^2} \wedge \frac{ 1}{4||G||_{\infty}}$, then 

(a) for all $j\ne k$, we have 

$$ \frac{1}{2} \theta \epsilon_0 \le |U_{jk}| \le      2 \theta  || G ||_{\infty},$$

(b) for all $j$, we have $$\frac{1}{2} \le |U_{jj}|.$$

\end{lemma}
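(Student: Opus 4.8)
The plan is to view $U=e^{i\theta G}$ as a first-order perturbation of the identity and to control the higher-order terms entrywise. First I would write
$$ U = I + i\theta G + R, \qquad R := \sum_{n=2}^{\infty}\frac{(i\theta G)^n}{n!}. $$
Using submultiplicativity of $||\cdot||_{\infty}$ (so $||G^n||_{\infty}\le ||G||_{\infty}^n$) and the elementary bound $n!\ge 2\,(n-2)!$ for $n\ge 2$, one gets
$$ ||R||_{\infty} \le \sum_{n=2}^{\infty}\frac{(\theta||G||_{\infty})^n}{n!} \le \frac{(\theta||G||_{\infty})^2}{2}\,e^{\theta||G||_{\infty}} \le (\theta||G||_{\infty})^2, $$
where the last step uses the hypothesis $\theta\le \frac{1}{4||G||_{\infty}}$, hence $\theta||G||_{\infty}\le\frac14$ and $e^{\theta||G||_{\infty}}\le 2$. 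Since $||A||_{\infty}$ is the maximal absolute row sum of $A$, every entry obeys $|A_{jk}|\le ||A||_{\infty}$; in particular $|R_{jk}|\le (\theta||G||_{\infty})^2$ and $|G_{jk}|\le ||G||_{\infty}$ for all $j,k$. This is the only genuinely analytic input; the rest is the reverse triangle inequality.

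For (a), with $j\ne k$ I would use $U_{jk}=i\theta G_{jk}+R_{jk}$. The lower bound then follows from
$$ |U_{jk}| \ge \theta|G_{jk}|-|R_{jk}| \ge \theta\epsilon_0-(\theta||G||_{\infty})^2 \ge \theta\epsilon_0-\tfrac14\theta\epsilon_0 \ge \tfrac12\theta\epsilon_0, $$
where the third inequality is exactly where the threshold $\theta\le\frac{\epsilon_0}{4||G||_{\infty}^2}$ enters (it gives $(\theta||G||_{\infty})^2=\theta\cdot\theta||G||_{\infty}^2\le\tfrac14\theta\epsilon_0$). The upper bound is
$$ |U_{jk}| \le \theta|G_{jk}|+|R_{jk}| \le \theta||G||_{\infty}+(\theta||G||_{\infty})^2 \le \tfrac54\,\theta||G||_{\infty} \le 2\theta||G||_{\infty}. $$
For (b), I would use that $G$ self-adjoint forces $G_{jj}\in\mathbb{R}$, so $i\theta G_{jj}$ is purely imaginary and $|1+i\theta G_{jj}|=\sqrt{1+\theta^2G_{jj}^2}\ge 1$. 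Hence from $U_{jj}=1+i\theta G_{jj}+R_{jj}$,
$$ |U_{jj}| \ge |1+i\theta G_{jj}|-|R_{jj}| \ge 1-(\theta||G||_{\infty})^2 \ge 1-\tfrac1{16} \ge \tfrac12. $$

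I do not anticipate a real obstacle: the statement is essentially a quantitative Taylor expansion of a matrix exponential. The only points needing care are (i) extracting a clean quadratic remainder bound $||R||_{\infty}\le(\theta||G||_{\infty})^2$ rather than dragging $e^{\theta||G||_{\infty}}$ through the argument, and (ii) bookkeeping the two hypotheses on $\theta$ — the $\epsilon_0/||G||_{\infty}^2$ threshold is what the lower bound in (a) consumes, while the $1/||G||_{\infty}$ threshold is what the upper bound in (a) and all of (b) consume — so that the stated constants $\tfrac12$ and $2$ come out with room to spare. The degenerate case $\theta=0$, where $U=I$, is covered automatically.
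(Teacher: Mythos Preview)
Your proof is correct and follows essentially the same Taylor-expansion approach as the paper: expand $e^{i\theta G}$, isolate the linear term, and bound the tail entrywise via $||\cdot||_\infty$. Two minor differences worth noting: the paper bounds the remainder by the cruder geometric series $\sum_{n\ge 2}(\theta||G||_\infty)^n\le 2(\theta||G||_\infty)^2$ rather than your $\frac{x^2}{2}e^x\le x^2$; and in (b) the paper simply subtracts the full tail from $n=1$, whereas your observation that self-adjointness forces $G_{jj}\in\mathbb{R}$ (so $|1+i\theta G_{jj}|\ge 1$) is a small but genuine sharpening that lets you subtract only the quadratic remainder.
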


\begin{proof}[Proof of Lemma \ref{le:lower bound semigroup}] (a) Let $ j \ne k$. For the first inequality, we have  
$$|U_{jk}|=|(e^{i\theta G})_{jk}|=|[e^{i\theta G}-I]_{jk}|=|i\theta G_{jk} +\sum_{n=2}^{\infty} \frac{(i \theta G )^n_{jk}}{n !} |$$
$$ \ge |i\theta G_{jk}| - |\sum_{n=2}^{\infty} \frac{(i \theta G )^n_{jk}}{n !} |$$
\begin{eqnarray}
 \ge \theta \epsilon_0 - \sum_{n=2}^{\infty} \frac{ \theta ^n || G ||_{\infty}^n}{n !} \label{eq:lower bound1}
\end{eqnarray}
The second term of the above is bounded by 
\begin{eqnarray} \sum_{n=2}^{\infty} \frac{ \theta ^n || G ||_{\infty}^n}{n !}  \le \sum_{n=2}^{\infty}  \theta ^n || G ||_{\infty}^n=\frac{\theta ^2 || G ||_{\infty}^2}{1-\theta  || G ||_{\infty}} \le 2 \theta ^2 || G ||_{\infty}^2, \label{eq: lower bound2}
\end{eqnarray}
since $\theta \le \frac{1}{4||G||_{\infty}}$. Therefore, we have 
\begin{eqnarray}
|U_{jk}| \ge \theta \epsilon_0 - 2 \theta ^2 || G ||_{\infty}^2 \ge \frac{1}{2}\theta \epsilon_0,
\end{eqnarray}
since $\theta \le \frac{\epsilon_0}{4||G||_{\infty}^2}$.

For the second inequality in (a), we have 
$$|U_{jk}|=|(e^{i\theta G})_{jk}|=|[e^{i\theta G}-I]_{jk}|=|\sum_{n=1}^{\infty} \frac{(i \theta G )^n_{jk}}{n !} |$$
\begin{eqnarray}
 &\le\sum_{n=1}^{\infty} \frac{ \theta ^n || G ||_{\infty}^n}{n !} \label{eq:upper  bound1}
 &\le \frac{\theta  || G ||_{\infty}}{1-\theta  || G ||_{\infty}} \le 2 \theta  || G ||_{\infty}, \label{eq: upper  bound2}
\end{eqnarray}
%\begin{eqnarray} \sum_{n=2}^{\infty} \frac{ \theta ^n || G ||_{\infty}^n}{n !}  \le \sum_{n=2}^{\infty}  \theta ^n || G ||_{\infty}^n=\frac{\theta ^2 || G ||_{\infty}^2}{1-\theta  || G ||_{\infty}} \le 2 \theta ^2 || G ||_{\infty}^2, \label{eq: upper  bound2}
%\end{eqnarray}
since $\theta \le \frac{1}{4||G||_{\infty}}$.

(b) For any $j$, we have $$|U_{jj}|=|(e^{i\theta G})_{jj}|=|1 +\sum_{n=1}^{\infty} \frac{(i \theta G )^n_{jj}}{n !} |$$
$$ \ge 1- |\sum_{n=1}^{\infty} \frac{(i \theta G )^n_{jk}}{n !} |$$
\begin{eqnarray}
 \ge 1 - \sum_{n=1}^{\infty} \frac{ \theta ^n || G ||_{\infty}^n}{n !} \label{eq:lower bound3}
\end{eqnarray}
The second term of the above is bounded by 
\begin{eqnarray} \sum_{n=1}^{\infty} \frac{ \theta ^n || G ||_{\infty}^n}{n !}  \le \sum_{n=1}^{\infty}  \theta ^n || G ||_{\infty}^n=\frac{\theta  || G ||_{\infty}}{1-\theta  || G ||_{\infty}} \le 2 \theta  || G ||_{\infty}, \label{eq: lower bound4}
\end{eqnarray}
since $\theta \le \frac{1}{4||G||_{\infty}}$. Therefore, we have 
\begin{eqnarray}
|U_{jj}| \ge 1 - 2 \theta  || G ||_{\infty} \ge \frac{1}{2},
\end{eqnarray}
since $\theta \le \frac{1}{4||G||_{\infty}}$.

\end{proof}

Before we prove the convergence, let's first observe that $Q_{\sigma_n}$ defined in Definition \ref{MarkovQ} is doubly stochastic which will be useful later in the proofs. 
\begin{proposition}\label{DoubleStochQ}
	$Q_{\sigma_{n-1}}$ is doubly stochastic for all $n  \ge 1$.
	
\end{proposition}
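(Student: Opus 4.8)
The plan is to reduce the claim to the elementary fact that a product of unitary operators is unitary, together with the observation that the squared moduli of the matrix entries of a unitary operator, read off rows or columns, sum to one. Concretely, I would first set $V := U_{\sigma_{n-1}+T_n}\cdots U_{\sigma_{n-1}+1}$, which, conditionally on the realization of $\{T_k\}$ (equivalently of $\{\sigma_k\}$), is a finite product of unitary operators on $H$ and hence unitary. With this notation Definition \ref{MarkovQ} reads $Q_{\sigma_{n-1}}(i,j)=|\langle j|V|i\rangle|^2$, so everything reduces to properties of the single unitary $V$.

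Next I would verify stochasticity row by row: for fixed $i$, inserting the resolution of the identity $\sum_{j}|j\rangle\langle j|=I$ gives
$$\sum_{j=1}^m Q_{\sigma_{n-1}}(i,j)=\sum_{j=1}^m \langle i|V^*|j\rangle\langle j|V|i\rangle=\langle i|V^*V|i\rangle=\langle i|i\rangle=1,$$
using $V^*V=I$. For the column sums I would argue symmetrically, using instead that $VV^*=I$: for fixed $j$,
$$\sum_{i=1}^m Q_{\sigma_{n-1}}(i,j)=\sum_{i=1}^m \langle j|V|i\rangle\langle i|V^*|j\rangle=\langle j|VV^*|j\rangle=\langle j|j\rangle=1.$$
Since every entry $|\langle j|V|i\rangle|^2$ is manifestly nonnegative, this shows $Q_{\sigma_{n-1}}$ is a nonnegative matrix with all row sums and all column sums equal to $1$, i.e. doubly stochastic; as this holds for each realization of $\sigma_{n-1},T_n$ and each $n\ge 1$, the proposition follows.

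There is essentially no obstacle here: the only thing to be slightly careful about is that the statement is about the random matrix $Q_{\sigma_{n-1}}$, so the argument should be understood pointwise in the sample path $\{\sigma_k\}$ (the number of factors $T_n$ in $V$ is random but always finite and positive), and the unitarity of $V$ used above is unconditional once that path is fixed. I would also note in passing that doubly stochasticity immediately gives $\Pi Q_{\sigma_{n-1}}=\Pi$ for the uniform matrix $\Pi_{ij}=1/m$, which is the form in which this proposition will be fed into Theorem \ref{ClassicConver} later.
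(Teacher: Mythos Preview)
Your proof is correct and follows essentially the same approach as the paper: both arguments boil down to inserting the resolution of the identity $\sum_k|k\rangle\langle k|=I$ into the row/column sums and invoking unitarity of the product $V=U_{\sigma_{n-1}+T_n}\cdots U_{\sigma_{n-1}+1}$. The paper phrases the column-sum computation via a trace, writing $\sum_i Q_{\sigma_{n-1}}(i,j)=\sum_i Tr\big[|j\rangle\langle j|V|i\rangle\langle i|V^*\big]=Tr(|j\rangle\langle j|)=1$, whereas you write the same identity directly as $\langle j|VV^*|j\rangle=1$; the content is identical.
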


\begin{proof}[Proof of Proposition \ref{DoubleStochQ}]
Note that a matrix $A=(a_{ij})$, $a_{ij} \ge 0$,  is doubly stochastic if
$$\sum_{i}a_{ij}=1\ \text{ for all } j$$
and
$$\sum_{j}a_{ij}=1\ \text{ for all } i$$
So, we have for all $j=1, 2, ..., m$,
$$\sum_{i=1}^{m}Q_{\sigma_{n-1}}(i,j)=\sum_{i=1}^{m}\langle j|U_{\sigma_{n-1}+T_n}\cdots U_{\sigma_{n-1}+1}|i\rangle\langle i|U_{\sigma_{n-1}+1}^*\cdots U_{\sigma_{n-1}+T_n}^*|j\rangle$$
$$=\sum_{i=1}^{m}Tr\big[|j\rangle\langle j|U_{\sigma_{n-1}+T_n}\cdots U_{\sigma_{n-1}+1}|i\rangle\langle i|U_{\sigma_{n-1}+1}^*\cdots U_{\sigma_{n-1}+T_n}^*\big]$$
$$=Tr\big(|j\rangle\langle j|\big)=1$$
And also, $\displaystyle \sum_{j=1}^{m}Q_{\sigma_{n-1}}(i,j)=1$ for all $i=1,2, ..., m$ by similar argument.
\end{proof}

Finally, we have the equilibrium property
\begin{proposition}\label{equi}
Let 
$H=span\{|1\rangle,...,|m\rangle\}$ with $m\in\mathbb{N}$, and $U_n=e^{i\frac{G}{\sqrt{n^\zeta}}}$ where $G$ is an $m\times m$ self-adjoint  matrix, where  $\zeta$ is a  non-negative real number. Suppose there exists a constant $\epsilon_0 >0$ such that $|G_{ij}|  \ge \epsilon_0$, for all $i, j$. 
	Let $\Pi$ be an $m \times m$ matrix with $\Pi_{ij}=\frac{1}{m}$, for all $i, j $. If $0 <  \zeta \leq 1$ and $0<p\leq 1$, then for all $0 \le s < \infty$, 
	$$ Q_{\sigma_s} Q_{\sigma_{s+1}} Q_{\sigma_{s+2}}\cdots Q_{\sigma_{n-1}} \rightarrow \Pi,$$
%	=\begin{bmatrix} 
%	1/2 & 1/2 \\
%	1/2 & 1/2 
%	\end{bmatrix},$$
	almost surely, as $n\rightarrow\infty$. 
	
\end{proposition}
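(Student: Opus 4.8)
The plan is to reduce Proposition \ref{equi} to the classical convergence result, Theorem \ref{ClassicConver}, applied to the random sequence of transition matrices $P_k := Q_{\sigma_{k-1}}$ ($k\ge 1$) together with the uniform matrix $\Pi$. Two structural facts make this work. By Proposition \ref{DoubleStochQ} each $Q_{\sigma_{k-1}}$ is doubly stochastic, so both $\Pi P_k = \Pi$ and $P_k\Pi = \Pi$ hold for \emph{every} $k$; in particular the hypothesis $\Pi P_k=\Pi$ of Theorem \ref{ClassicConver} comes for free, and since every $P_k$ fixes $\Pi$ on the left, a finite initial block of factors may be peeled off harmlessly, so it suffices to control the tail of the product (and, after the obvious shift of indices, to get the conclusion for any convenient starting index, which recovers the statement "for all $0\le s<\infty$" by taking $k_0$ large). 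The second fact is that the coins $U_n = e^{iGn^{-\zeta/2}}$ are all functions of the single self-adjoint matrix $G$, hence commute, so each coherent block telescopes:
$$ U_{\sigma_k}U_{\sigma_k-1}\cdots U_{\sigma_{k-1}+1} = \exp\!\big(i\,\Theta_k\,G\big),\qquad \Theta_k := \sum_{\ell=\sigma_{k-1}+1}^{\sigma_k}\ell^{-\zeta/2}, $$
and consequently $Q_{\sigma_{k-1}}(i,j) = \big|[\exp(i\Theta_k G)]_{ji}\big|^2$. This identity turns an entire coherent block into a single instance of Lemma \ref{le:lower bound semigroup}.

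Next I would establish the Doeblin-type minorization $P_k\ge\delta_k\Pi$. On the event $\{\Theta_k\le\tfrac{\epsilon_0}{4\|G\|_\infty^2}\wedge\tfrac{1}{4\|G\|_\infty}\}$, Lemma \ref{le:lower bound semigroup} gives $|[\exp(i\Theta_k G)]_{ji}|\ge\tfrac12\Theta_k\epsilon_0$ for $i\ne j$ and $|[\exp(i\Theta_k G)]_{jj}|\ge\tfrac12$; squaring, and using that $\tfrac14\Theta_k^2\epsilon_0^2\le\tfrac14$ once $\Theta_k$ is small, we get $Q_{\sigma_{k-1}}(i,j)\ge\tfrac14\Theta_k^2\epsilon_0^2$ for all $i,j$, i.e.\ $P_k\ge\delta_k\Pi$ with $\delta_k:=\tfrac{m}{4}\epsilon_0^2\Theta_k^2$, and $\delta_k\in[0,1]$ for $k$ large. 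By Theorem \ref{ClassicConver}, the Proposition then follows from two almost sure claims about $\Theta_k$: (i) the window bound $\Theta_k\le\tfrac{\epsilon_0}{4\|G\|_\infty^2}\wedge\tfrac{1}{4\|G\|_\infty}$ holds for all sufficiently large $k$ (so the minorization is available from some $k_0$ on); and (ii) $\sum_k\delta_k=\infty$, equivalently $\sum_k\Theta_k^2=\infty$, which forces $\prod_k(1-\delta_k)=0$.

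Both claims reduce to crude two-sided control of $\Theta_k$. Since each geometric $T_j\ge 1$ we have $\sigma_{k-1}\ge k-1$, so every summand in $\Theta_k$ is at most $k^{-\zeta/2}$ and $\Theta_k\le T_k\,k^{-\zeta/2}$; as the $T_k$ are i.i.d.\ with $q=1-p<1$ and $\zeta>0$, $\sum_k P(T_k\ge\varepsilon k^{\zeta/2})=\sum_k q^{\lceil\varepsilon k^{\zeta/2}\rceil-1}<\infty$ for every $\varepsilon>0$, so Borel--Cantelli gives $T_k k^{-\zeta/2}\to 0$, hence $\Theta_k\to 0$ almost surely, which is (i) (and also $\delta_k\to 0$). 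For the other direction, $\Theta_k$ is a sum of $T_k\ge 1$ terms each at least $\sigma_k^{-\zeta/2}$, so $\Theta_k^2\ge\sigma_k^{-\zeta}$; by the strong law of large numbers $\sigma_k\le(2/p)k$ for all large $k$ almost surely, whence $\sum_k\Theta_k^2\ge(p/2)^\zeta\sum_k k^{-\zeta}=\infty$ \emph{precisely because} $0<\zeta\le 1$. This gives (ii), and applying Theorem \ref{ClassicConver} on the almost sure event where (i) and (ii) hold completes the proof.

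The main obstacle is exactly this last balancing act: the block lengths $T_k$ are unbounded, so one must simultaneously keep $\Theta_k$ small enough that Lemma \ref{le:lower bound semigroup} applies and yet $\ell^2$-large enough along the decoherence times that the minorizations accumulate to $\prod_k(1-\delta_k)=0$. Commutativity of the coins is what makes this feasible — it collapses each coherent block to the scalar-parameter semigroup element $\exp(i\Theta_k G)$ and yields the transparent dichotomy that $\sum_k\Theta_k^2$ converges or diverges together with $\sum_k k^{-\zeta}$, which is why $\zeta\le 1$ is the relevant threshold and the borderline-divergent case is expected to be critical. For a genuinely inhomogeneous, non-commuting family of coins the telescoping identity would have to be replaced by a Baker--Campbell--Hausdorff or perturbative estimate for the block product, a substantially harder task. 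The restriction $\zeta>0$ enters here as well: at $\zeta=0$ one has $\Theta_k=T_k\not\to 0$, claim (i) fails, and Lemma \ref{le:lower bound semigroup} is inapplicable, which is why that case is treated by other means in \cite{LagroQuantumLimit}.
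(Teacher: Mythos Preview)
Your proof is correct and follows essentially the same route as the paper's: reduce to Theorem \ref{ClassicConver} via the Doeblin minorization $Q_{\sigma_{k-1}}\ge \delta_k\Pi$, obtained by telescoping the commuting coins to $e^{i\Theta_k G}$ and applying Lemma \ref{le:lower bound semigroup}, then use $\Theta_k\to 0$ a.s.\ (for the window condition) and $\sum_k \Theta_k^2=\infty$ a.s.\ via SLLN (for $\prod_k(1-\delta_k)=0$). The only cosmetic differences are that the paper records the minorization constant as $\delta_{n-1}=\tfrac{m\epsilon_0^2 T_n^2}{4\sigma_n^\zeta}$ (the lower bound $\Theta_k\ge T_k\sigma_k^{-\zeta/2}$ substituted in) rather than your $\tfrac{m}{4}\epsilon_0^2\Theta_k^2$, and that the paper is terser about why $\Theta_k\to 0$, whereas you spell out the Borel--Cantelli step explicitly.
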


\begin{proof}[Proof of Proposition \ref{equi} ]

Step 1. We first show that with probability 1, $Q_{\sigma_{n-1}} \ge \delta_{n-1} \Pi$, where $\delta_{n-1}= \frac{m \epsilon_0^2T_n^2}{4(\sigma_{n})^{\zeta}}$, for sufficeintly large $n$.

Let $$\theta_0=\frac{ \epsilon_0}{4||G||_{\infty}^2} \wedge \frac{ 1}{4||G||_{\infty}}$$
Let $$\theta = \frac{1}{(\sigma_{n-1}+1) ^ {\zeta /2}} +...+ \frac{1}{(\sigma_{n-1}+T_n) ^ {\zeta /2}}.$$ Then we have 
\begin{eqnarray}\label{eq:theta bounds}
\frac{T_n}{\sigma_{n}^{\zeta /2}} \le \theta \le \frac{T_n}{\sigma_{n-1}^{\zeta /2}}
\end{eqnarray}
This implies that with probability one, $\theta \to 0$ as $n \to \infty$. Therefore, with probability 1, there is an $n_0 >s$ such that $\theta < \theta_0$, for all $n \ge n_0 >s$.

By definition, 
\begin{eqnarray}
&Q_{\sigma_{n-1}}(i, j) =\langle j|U_{\sigma_{n-1}+T_n}\cdots U_{\sigma_{n-1}+1}|i\rangle\langle i|U_{\sigma_{n-1}+1}^*\cdots U_{\sigma_{n-1}+T_n}^*|j\rangle\\
&=|\langle j|U_{\sigma_{n-1}+T_n}\cdots U_{\sigma_{n-1}+1}|i\rangle|^2=|e^{i \theta G}(i, j)|^2\\
 &\ge \frac{\epsilon _0 ^2 \theta ^2}{4},
\end{eqnarray}
for $i \ne j$, by Lemma \ref{le:lower bound semigroup}, if $\theta \le \theta_0$. For $i=j$, by Lemma \ref{le:lower bound semigroup}, we have a lower bound $1/4$. Combining (\ref{eq:theta bounds}), we have $Q_{\sigma_{n-1}} \ge \delta_{n-1} \Pi$, where $\delta_{n-1}= \frac{m \epsilon_0^2T_n^2}{4(\sigma_{n})^{\zeta}}$, for some $n_0 >s$  with  $n \ge n_0$.

Now
$$\prod_{k=n_0}^{n}\alpha_k=\prod_{k=1}^{n}(1-\delta_k)=\prod_{k=n_0}^{n}\Big[1-\frac{m \epsilon_0^2T_k^2}{4(\sigma_{k})^{\zeta}}\Big]$$
$$=e^{\sum_{k=n_0}^{n}\ln \Big[1-\frac{m \epsilon_0^2T_k^2}{4(\sigma_{k})^{\zeta}}\Big]}\leq e^{-\sum_{k=n_0}^{n}\frac{m \epsilon_0^2T_n^2}{4(\sigma_{n})^{\zeta}}}$$
since $\ln(1-x)\leq-x$ for $0<x<1$.

%Observe that each $T_k\geq1$, so we have, $\frac{T_1+...+T_{k}}{k}\geq 1$ for all $ k$ , and this implies
%$$\frac{1}{\frac{T_1+...+T_{k-1}+1}{k-1}}\leq\frac{1}{1+\frac{1}{k-1}}\leq1$$
%Which means that by the Bounded Convergence Theorem, 
By the Strong Law of Large Numbers, with probability one,
$$\frac{k^{\zeta}}{\sigma_{k}^\zeta}=
\frac{1}{\big(\frac{T_1+T_2+\cdots+T_{k}+1}{k}\big)^\zeta}\rightarrow\frac{1}{(E(T_1))^\zeta}={p^\zeta}$$ as $k\rightarrow\infty$.
%$$=E\Big[\frac{1}{\big(\frac{T_1+T_2+\cdots+T_{k-1}+1}{k-1}\big)^\zeta}\Big]\frac{1}{(k-1)^\zeta}\rightarrow\frac{1}{(E(T_1))^\zeta}\frac{1}{(k-1)^\zeta}=\frac{p^\zeta}{(k-1)^\zeta}$$ as $k\rightarrow\infty$
Therefore,
$$\prod_{k=n_0}^{n}\alpha_k\leq e^{-\sum_{k=n_0}^{n}\frac{m \epsilon_0^2}{4\sigma_{n}^{\zeta}}}\leq \exp\Big(-\frac{m\epsilon_0^2}{4}\sum_{k=n_0}^{n}\frac{p^\zeta}{k^{\zeta}} \frac{\frac{1}{\sigma_{k}^\zeta}}{\frac{p^\zeta}{k^\zeta}}\Big)$$
Since $\displaystyle\frac{\frac{1}{\sigma_{k}^\zeta}}{\frac{p^\zeta}{k^\zeta}}\rightarrow1$, almost surely, as $k\rightarrow \infty$, we have that $\prod_{k=n_0}^{n}\alpha_k\rightarrow0$ a.s., if $\zeta\leq1$.
And, we conclude that, $Q_{\sigma_s}\cdots Q_{\sigma_{n-1}} \rightarrow\Pi$, a.s.,  for all $0 < \zeta\leq1$ and $0<p\leq1$.

\end{proof}

Note that for the special homogeneous case $\zeta=0$, the result is already known, and it was proved by Lagro et al.  \cite{LagroQuantumLimit} that the quantum Markov chain is convergent.
\cite{LagroQuantumLimit} did not use the compound Markov chain representation. It used the spectral theory of the density operators. So it does not contain this type of results for $\zeta =0$

\section{Convergence of density operators}\label{sec:convergence of density operators}
In this section we will prove convergence of density operators for $\zeta \le 1$ and $0 < p \le 1$. The homogeneous case $\zeta =0$ has been done in Lagro, Yang and Xiong \cite{LagroQuantumLimit} using spectral analysis of density operators. So  we will just focus on the case when $0< \zeta \le 1$. Let us recall a few definitions that will be  needed for this section. 

We consider 
$H=span\{|1\rangle,...,|m\rangle\}$ with $m\in\mathbb{N}$, and $U_n=e^{i\frac{G}{\sqrt{n^\zeta}}}$ where $G$ is an $m\times m$ self-adjoint matrix such that there exists $\epsilon_0>0$, $|G_{ij}|>\epsilon_0$ for all $i, j$. Let 
% with the following
$$ \Phi_{n}(\rho)=\sum_{i=0}^{m}A_iU_n\rho U_n^*A_i^* $$ and
\begin{eqnarray}\label{generalMC}
\rho_n=\Phi_{n}\cdots\Phi_{1}(\rho)
\end{eqnarray}
where $A_0=\sqrt{1-p}I$ and $A_i=\sqrt{p}|i\rangle \langle i|$, $1\leq i \leq m $, and let $\rho_0=|i\rangle \langle i|$ fixed, and $$P_n(i, j)=Tr\big(|j\rangle \langle j|\rho_n\big)$$
Let  $T_1,T_2,...$ be i.i.d. with distribution {\bfseries Geo($p$)}, and $\sigma_n=T_1+\cdots +T_n$. For a  fixed $t$, we let $n_t=\max\{n:\sigma_n\leq t\}$.
By Theorem \ref{th:Compond Markov Chain}, we have 
\begin{eqnarray}\label{eq:rep formula}
P_t(i, j)=Tr\big(|j\rangle \langle j|\rho_t\big)=E[Q_{\sigma_0}Q_{\sigma_1}...Q_{\sigma_{n_t-1}}W_{\sigma_{n_t}}(i,j)]
\end{eqnarray}
where $Q_{\sigma_{n-1}}(i,j)=|\langle j|U_{\sigma_{n-1}+T_n}\cdots U_{\sigma_{n-1}+1}|i\rangle|^2$ and $W_{\sigma_{n_t}}(i,j)=|\langle j|U_t\cdots U_{\sigma_n+1}|i\rangle|^2$.

We first prove convergence of probabilities as follows.
\begin{proposition}\label{generalEqui}
Suppose  there exists $\epsilon_0>0$ such that $|G_{ij}|>\epsilon_0$ for all $i, j$.
%Let $P_t(i, j) =P_t(j)$ with initial state $\rho_0=|i\rangle \langle i|$.}
	For all probability distribution $V$ and $j=1,2,...,m$, if $0<\zeta\leq1$ and $0<p\leq 1$, then
	$$\sum_{i}V_iP_t(i, j)\rightarrow \pi_j=\frac{1}{m},$$
	as $t\rightarrow \infty$.	
\end{proposition}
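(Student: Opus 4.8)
The plan is to derive Proposition~\ref{generalEqui} from the representation formula \eqref{eq:rep formula} together with the almost-sure convergence of products of the $Q$-matrices established in Proposition~\ref{equi}. First I would fix a probability distribution $V$ and a target state $j$, and write
$$\sum_i V_i P_t(i,j) = E\Big[\big(V^{\mathrm T} Q_{\sigma_0}Q_{\sigma_1}\cdots Q_{\sigma_{n_t-1}} W_{\sigma_{n_t}}\big)_j\Big],$$
using linearity of the expectation and of the matrix product in the initial vector. The strategy is then to show that the random variable inside the expectation converges almost surely to $\pi_j = \frac1m$ as $t\to\infty$, and to invoke the bounded convergence theorem (the integrand is a probability, hence bounded by $1$) to pass the limit through $E[\,\cdot\,]$.

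For the almost-sure convergence of the integrand, the key observations are: (i) as $t\to\infty$ we have $n_t\to\infty$ almost surely, since $\sigma_n/n \to 1/p$ by the Strong Law of Large Numbers, so $\sigma_n<\infty$ for every $n$; (ii) by Proposition~\ref{equi} (applied with $s=0$), $Q_{\sigma_0}\cdots Q_{\sigma_{n-1}} \to \Pi$ almost surely, where $\Pi$ has all entries $\frac1m$; (iii) each $W_{\sigma_{n_t}}$ is a stochastic matrix (rows of a $|\langle j|U\cdots U|i\rangle|^2$-type matrix sum to $1$ because the $U$'s are unitary), so $\|W_{\sigma_{n_t}}\|_\infty = 1$ uniformly. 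Combining these, write
$$V^{\mathrm T} Q_{\sigma_0}\cdots Q_{\sigma_{n_t-1}} W_{\sigma_{n_t}} = V^{\mathrm T}\Pi W_{\sigma_{n_t}} + V^{\mathrm T}\big(Q_{\sigma_0}\cdots Q_{\sigma_{n_t-1}} - \Pi\big) W_{\sigma_{n_t}}.$$
Since $V^{\mathrm T}\Pi$ is the uniform vector and $\Pi W = \Pi$ for any stochastic $W$, the first term equals the uniform row vector $(\frac1m,\dots,\frac1m)$ exactly. The second term is bounded in sup-norm by $\|Q_{\sigma_0}\cdots Q_{\sigma_{n_t-1}} - \Pi\|_\infty \cdot \|W_{\sigma_{n_t}}\|_\infty \le \|Q_{\sigma_0}\cdots Q_{\sigma_{n_t-1}} - \Pi\|_\infty$, which tends to $0$ almost surely by (ii) once we note that $n_t\to\infty$. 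Hence the $j$-th coordinate converges to $\frac1m$ almost surely.

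The main obstacle I anticipate is the bookkeeping around the random, $t$-dependent number of factors $n_t$: Proposition~\ref{equi} gives convergence of $Q_{\sigma_s}\cdots Q_{\sigma_{n-1}}$ along the deterministic index $n\to\infty$, and one must argue carefully that composing this with the random substitution $n\mapsto n_t$ (and tacking on the extra coherent block $W_{\sigma_{n_t}}$, whose "length" $t-\sigma_{n_t}$ is also random) does not disturb the limit. This is handled by the uniform bound $\|W_{\sigma_{n_t}}\|_\infty=1$ together with the almost-sure identity $\{n_t\to\infty\} = \{\sigma_n<\infty\ \forall n\}$ of full probability, so that on this event the deterministic convergence statement applies verbatim. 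Once the integrand is shown to converge almost surely and is dominated by the constant $1$, bounded convergence finishes the proof. I would also remark that this argument only uses $0<\zeta\le 1$ through Proposition~\ref{equi}, so no new constraint on $\zeta$ or $p$ enters here.
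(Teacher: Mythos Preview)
Your decomposition is different from the paper's and, once a small slip is repaired, cleaner. The paper writes
\[
Q_{\sigma_0}\cdots Q_{\sigma_{n_t-1}}W_{\sigma_{n_t}}-\Pi \;=\; Q_{\sigma_0}\cdots Q_{\sigma_{n_t-1}}\bigl(W_{\sigma_{n_t}}-I\bigr)\;+\;\bigl(Q_{\sigma_0}\cdots Q_{\sigma_{n_t-1}}-\Pi\bigr)
\]
and then spends most of the proof establishing $W_{\sigma_{n_t}}\to I$ almost surely, via a diagonal/off-diagonal expansion of the $U_l$, Cauchy--Schwarz, the bounds of Lemma~\ref{le:lower bound semigroup}, and a Borel--Cantelli argument showing $T_{n+1}^2/\sigma_n^{\zeta}\to 0$. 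Your route sidesteps all of this: you split off $\Pi W_{\sigma_{n_t}}$, observe that it equals $\Pi$ exactly, and then only the $(Q\cdots Q-\Pi)W$ piece needs to vanish, which is immediate from $\|W_{\sigma_{n_t}}\|_\infty=1$ and Proposition~\ref{equi}. One correction is needed, however: the identity $\Pi W=\Pi$ requires the \emph{columns} of $W$ to sum to $1$; row-stochasticity alone gives $W\Pi=\Pi$, not $\Pi W=\Pi$. Fortunately $W_{\sigma_{n_t}}$ is doubly stochastic by exactly the unitarity argument used for $Q$ in Proposition~\ref{DoubleStochQ}, so your conclusion stands once you invoke that. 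The trade-off is that the paper's longer proof also yields the intermediate fact $W_{\sigma_{n_t}}\to I$ a.s.\ (equation~\eqref{eq:W limit}), which is reused in the proof of Theorem~\ref{finalEqui} to control the off-diagonal entries of $\rho_t$; your argument proves the present proposition more economically but does not supply that lemma for later use.
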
 

Using the same argument as in the proof, under the same conditions of Proposition \ref{generalEqui}, we have 
\begin{corollary}\label{cor:general almost sure limit}
 Suppose  there exists $\epsilon_0>0$ such that $|G_{ij}|>\epsilon_0$ for all $i, j$. If $0<\zeta\leq1$ and $0<p\leq 1$, then for all $0 \le s < \infty$, we have
$$Q_{\sigma_s}Q_{\sigma_1}...Q_{\sigma_{n_t-1}}W_{\sigma_{n_t}} \to \Pi,$$
almost surely, as $t \to \infty$.
\end{corollary}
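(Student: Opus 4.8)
The plan is to derive the statement from Proposition \ref{equi} by handling the two features that are new relative to it: the additional right factor $W_{\sigma_{n_t}}$, and the fact that the number of factors $n_t$ is a random index depending on $t$.

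First I would reduce to the product of $Q$'s alone. Since $U_t\cdots U_{\sigma_{n_t}+1}$ is unitary, the entries $W_{\sigma_{n_t}}(i,j)=|\langle j|U_t\cdots U_{\sigma_{n_t}+1}|i\rangle|^2$ have all row sums and all column sums equal to $1$, by the same computation as in Proposition \ref{DoubleStochQ}; hence $\|W_{\sigma_{n_t}}\|_\infty=1$, and, because $\Pi$ has all entries $1/m$ and $W_{\sigma_{n_t}}$ is column-stochastic, $\Pi W_{\sigma_{n_t}}=\Pi$. Using $\Pi W_{\sigma_{n_t}}=\Pi$ and submultiplicativity of $\|\cdot\|_\infty$,
\begin{eqnarray*}
\bigl\|Q_{\sigma_s}Q_{\sigma_{s+1}}\cdots Q_{\sigma_{n_t-1}}W_{\sigma_{n_t}}-\Pi\bigr\|_\infty
&=&\bigl\|\bigl(Q_{\sigma_s}\cdots Q_{\sigma_{n_t-1}}-\Pi\bigr)W_{\sigma_{n_t}}\bigr\|_\infty\\
&\le&\bigl\|Q_{\sigma_s}\cdots Q_{\sigma_{n_t-1}}-\Pi\bigr\|_\infty ,
\end{eqnarray*}
so it suffices to show that $\bigl\|Q_{\sigma_s}\cdots Q_{\sigma_{n_t-1}}-\Pi\bigr\|_\infty\to 0$ almost surely as $t\to\infty$.

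Next I would pass from the deterministic index to $n_t$. Since each $T_i\ge 1$ we have $\sigma_n\ge n$, so $n_t=\max\{n:\sigma_n\le t\}\to\infty$ as $t\to\infty$. By Proposition \ref{equi} there is a probability-one event on which $Q_{\sigma_s}\cdots Q_{\sigma_{N-1}}\to\Pi$ as $N\to\infty$; composing this convergence with $n_t\to\infty$ along each path in that event yields $Q_{\sigma_s}\cdots Q_{\sigma_{n_t-1}}\to\Pi$ almost surely, which together with the bound above proves the corollary. If one prefers a quantitative estimate that mirrors the proof of Proposition \ref{equi}, one can instead carry the decomposition $Q_{\sigma_s}\cdots Q_{\sigma_{N-1}}=\bigl(\prod_{k=n_0}^{N}\alpha_k\bigr)R_N+\bigl(1-\prod_{k=n_0}^{N}\alpha_k\bigr)\Pi$, in which $R_N$ is a product of stochastic matrices (so $\|R_N\|_\infty=1$) and $\prod_{k=n_0}^{N}\alpha_k\downarrow 0$ almost surely when $0<\zeta\le 1$, evaluate at $N=n_t$, and let $t\to\infty$.

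I expect the only delicate point to be the bookkeeping forced by the random index. The threshold $n_0$ appearing in the proof of Proposition \ref{equi} is itself random, but it is almost surely finite (it is defined on the full-measure event where $\theta\to 0$ and the strong law of large numbers hold), and the estimate above holds for every $t$ with $n_t\ge n_0$; this is what licenses composing the two almost-sure limits $n_t\to\infty$ and $\prod_{k=n_0}^{N}\alpha_k\to 0$ pathwise. The contribution of $W_{\sigma_{n_t}}$ is routine, but should be spelled out, since $W_{\sigma_{n_t}}$ --- unlike the $Q$'s --- is not one of the matrices treated in Theorem \ref{ClassicConver}.
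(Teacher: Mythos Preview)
Your argument is correct, and it is genuinely different from the route the paper takes. The paper derives the corollary by repeating the proof of Proposition \ref{generalEqui} pathwise: it writes
\[
Q_{\sigma_s}\cdots Q_{\sigma_{n_t-1}}W_{\sigma_{n_t}}-\Pi
= Q_{\sigma_s}\cdots Q_{\sigma_{n_t-1}}\bigl(W_{\sigma_{n_t}}-I\bigr)
+ \bigl(Q_{\sigma_s}\cdots Q_{\sigma_{n_t-1}}-\Pi\bigr),
\]
and then spends most of the effort showing that $W_{\sigma_{n_t}}\to I$ almost surely (equation \eqref{eq:W limit}), via an off-diagonal expansion, Lemma \ref{le:lower bound semigroup}, and a Borel--Cantelli argument for $T_{n+1}^2/\sigma_n^\zeta\to 0$. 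Your decomposition $Q\cdots QW-\Pi=(Q\cdots Q-\Pi)W$ exploits instead the purely algebraic fact $\Pi W_{\sigma_{n_t}}=\Pi$ (double stochasticity of $W$), together with $\|W_{\sigma_{n_t}}\|_\infty=1$, and thereby bypasses the $W\to I$ analysis entirely; only Proposition \ref{equi} and $n_t\to\infty$ are needed. This is strictly more economical for the corollary as stated. The paper's approach, on the other hand, yields the stronger intermediate fact $W_{\sigma_{n_t}}\to I$ a.s., which it reuses later in the proof of Theorem \ref{finalEqui} to control the off-diagonal terms $W(l,j,k)$; your shortcut would not supply that ingredient.

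One small expository point: the implication ``$T_i\ge 1\Rightarrow\sigma_n\ge n\Rightarrow n_t\to\infty$'' is not quite the right chain (it gives $n_t\le t$). What you need is simply that each $\sigma_n$ is almost surely finite, so for every $N$ one eventually has $t\ge\sigma_N$ and hence $n_t\ge N$.
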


\begin{proof}[Proof of Proposition \ref{generalEqui}] Let $\Pi (i, j)=\pi_{j}=1/m$.
Observe that we can write $P_t(i, j)-\Pi (i, j)=$
$$E[Q_{\sigma_0}Q_{\sigma_1}...Q_{\sigma_{n_t-1}}W_{\sigma_{n_t}}-\Pi]=E[Q_{\sigma_0}Q_{\sigma_1}...Q_{\sigma_{n_t-1}}W_{\sigma_{n_t}}-Q_{\sigma_0}Q_{\sigma_1}...Q_{\sigma_{n_t-1}}+Q_{\sigma_0}Q_{\sigma_1}...Q_{\sigma_{n_t-1}}-\Pi]$$
$$=E[Q_{\sigma_0}Q_{\sigma_1}...Q_{\sigma_{n_t-1}}(W_{\sigma_{n_t}}-I\big)]+E\big[Q_{\sigma_0}Q_{\sigma_1}...Q_{\sigma_{n_t-1}}-\Pi\big]$$
Since $0<\zeta\leq1$, the second term tends to $0$ as $t\rightarrow \infty$ by Proposition  \ref{equi}.  Since $Q_{\sigma_0}Q_{\sigma_1}...Q_{\sigma_{n_t-1}} $ and  $W_{\sigma_{n_t}}-I $ are bounded,  by the Bounded Convergence Theorem, it is sufficient  to show that
\begin{eqnarray}\label{eq:W limit}
\lim_{t \to \infty} W_{\sigma_{n_t}}=I,  a.s.
\end{eqnarray}

$$W_{\sigma_{n_t}}(i,j)=|\langle j|U_t\cdots U_{\sigma_{n_t}+1}|i\rangle|^2=\big|\langle j|e^{\frac{iG}{\sqrt{t^\zeta}}}\cdots e^{\frac{iG}{\sqrt{(\sigma_{n_t}+1)^\zeta}}}|i\rangle\big|^2$$
$$=\big|\langle j|e^{iG\sum_{k=\sigma_{n_t}+1}^{t}\frac{1}{\sqrt{k^\zeta}}}|i\rangle\big|^2$$
Note that if we show that $|\langle j|e^{iG\sum_{k=\sigma_{n_t}+1}^{t}\frac{1}{\sqrt{k^\zeta}}}|i\rangle\big|^2\rightarrow 0$ a.s. for $i\neq j$, then $W_{\sigma_{n_t}}(i,j)\rightarrow \delta_{ij}$, a.s. and we will obtain the result.

Let's write $U_n=U_n^F+U_n^D$ for every $n$ where $U_n^F$ is the off-diagonal part and $U_n^D$ is the diagonal parts of $U_n$. So, for $i\neq j$,
$$|\langle j|e^{iG\sum_{k=\sigma_{n_t}+1}^{t}\frac{1}{\sqrt{k^\zeta}}}|i\rangle\big|^2=\Big|\sum_{l=\sigma_{N_{t}}+1}^{t}\langle j|U_t\cdots U_{l+1}U_{l}^FU_{l-1}^D\cdots U_{\sigma_{N_{t}}+1}^D|i\rangle\Big|^2$$
The above expansion is made by applying $U\cdots U(U^F+U^D)=U\cdots UU^F+U\cdots U(U^F+U^D)U^D=\cdots$, successively. By the Cauchy-Schwarz Inequality,
$$\leq (t-\sigma_{N_{t}})\sum_{l=\sigma_{N_{t}}+1}^{t}\big|\langle j|U_t\cdots U_{l+1}U_{l}^FU_{l-1}^D\cdots U_{\sigma_{N_{t}}+1}^D|i\rangle\Big|^2$$
By definition, we have $\displaystyle U_n=e^{\frac{iG}{\sqrt{n^\zeta}}}$, where $G$ is self-adjoint  $m$ by $m$ matrix and $|U_n(i, j)|\leq 1$ for all $i, j$,  the above is 
$$\leq (t-\sigma_{N_{t}})\sum_{l=\sigma_{N_{t}}+1}^{t}\big|\langle j|U_t\cdots U_{l+1}U_{l}^F|i\rangle U_{l-1}(i,i)\cdots U_{\sigma_{N_{t}}+1}(i,i)\big|^2$$
$$\leq (t-\sigma_{N_{t}})\sum_{l=\sigma_{N_{t}}+1}^{t}\big|\sum_{k\neq i}\langle j|U_t\cdots U_{l+1}|k\rangle U_{l}(k,i) U_{l-1}(i,i)\cdots U_{\sigma_{N_{t}}+1}(i,i)\big|^2$$
By the Cauchy-Schwarz Inequality again,
$$\leq (t-\sigma_{N_{t}})\sum_{l=\sigma_{N_{t}}+1}^{t}(m-1)\sum_{k\neq i}\big|\langle j|U_t\cdots U_{l+1}|k\rangle\big|^2 \big|U_{l}(k,i)\big|^2 \big|U_{l-1}(i,i)\cdots U_{\sigma_{N_{t}}+1}(i,i)\big|^2$$
Using the fact that
$\big|\langle j|U_t\cdots U_{l+1}|k\rangle\big|^2\leq 1$, and by Lemma \ref{le:lower bound semigroup}, $ \big|U_{l}(k,i)\big|^2\le 4 ||G||_{\infty}^2(\frac{1}{l^\zeta})$, for sufficiently large $l$, and  $$\displaystyle\big|U_{l-1}(i,i)\cdots U_{\sigma_{N_{t}}+1}(i,i)\big|^2\leq 1,$$ 
we conclude
$$(t-\sigma_{N_{t}})\sum_{l=\sigma_{N_{t}}+1}^{t}(m-1)^2 4 ||G||_{\infty}^2\big(\frac{1}{l^\zeta}\big)\leq (m-1)^2(t-\sigma_{N_{t}})^24 ||G||_{\infty}^2\Big(\frac{1}{{(\sigma_{N_{t}}+1)}^\zeta}\Big) $$
Now, it's enough to prove that $\displaystyle \frac{(t-\sigma_{N_{t}})^2}{(\sigma_{N_{t}})^\zeta}\leq \frac{T_{N_t+1}^2}{(T_1+\cdots + T_{N_t})^\zeta}\rightarrow 0$ a.s. The first inequality is trivial, and note that
$$\sum_{n=1}^{\infty}P\Big(\Big|\frac{T_n^2}{n^\zeta}\Big|>\epsilon\Big)=\sum_{n=1}^{\infty}P\Big(T_n^2>n^\zeta\epsilon\Big)=\sum_{n=1}^{\infty}P\Big(\frac{T_1^\frac{2}{\zeta}}{\epsilon^\frac{1}{\zeta}}>n\Big)\leq E\big(\frac{T_1^\frac{2}{\zeta}}{\epsilon^\frac{1}{\zeta}}\big)+1<\infty.$$
By the Borel-Cantelli Lemma,
$$P\Big(\Big|\frac{T_n^2}{n^\zeta}\Big|>\epsilon \text{ i.o. }\Big)=0$$
Therefore
$$\frac{T_n^2}{n^\zeta}\rightarrow 0 \text{ a.s.}$$
So, using the Strong Law of Large Numbers, we can conclude that
$$\frac{T_{n+1}^2}{(T_1+\cdots T_n)^\zeta}=\frac{\frac{T_{n+1}^2}{n^\zeta}}{(\frac{T_1+\cdots T_n}{n})^\zeta}\rightarrow \frac{0}{E(T_1)^\zeta} \text{ a.s. }$$

\end{proof}

Now we show our main result for the  convergence of density operators.
%{\color{red} Therefore, we have the most general convergence theorem for decoherent quantum Markov chains as follows,}
For $\zeta=0$, the following result has been obtained in \cite{LagroQuantumLimit} with more general conditions. Here we prove for the case $0 < \zeta\leq1$. 
\begin{theorem}\label{finalEqui}
	Suppose that $|G_{ij}|>\epsilon_0>0$ for all $i,j$. Suppose  $0 < \zeta\leq1$ and $0<p\leq1$.  Then for any initial density matrix $\rho_0$, we have 
\begin{eqnarray}
\rho_t\rightarrow\sum_{i=1}^{m}\pi_i\cdot|i\rangle\langle i| \label{eq:density limit result}
\end{eqnarray}
where $\pi_i=\frac{1}{m}$, for $i=1,...,m$.
\end{theorem}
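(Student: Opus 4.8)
The plan is to reduce the statement to the case of a computational-basis initial state $\rho_0=|a\rangle\langle a|$, which is essentially covered by the Compound Markov Chain Representation together with Proposition~\ref{generalEqui}, and then to pass to an arbitrary $\rho_0$ by conditioning on the first decoherence event.

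\textbf{Step 1 (basis initial states).} Repeating the expansion used in the proofs of Proposition~\ref{markovian} and Theorem~\ref{th:Compond Markov Chain} for $\Phi_t\cdots\Phi_1(|a\rangle\langle a|)$, but without taking the trace in the last line, I would obtain the density-operator form of the representation
$$\rho_t=E\Big[\sum_{i=1}^{m}\big(Q_{\sigma_0}Q_{\sigma_1}\cdots Q_{\sigma_{n_t-1}}\big)(a,i)\,M_t\,|i\rangle\langle i|\,M_t^{*}\Big],\qquad M_t:=U_t\cdots U_{\sigma_{n_t}+1},$$
with $M_t=I$ if $\sigma_{n_t}=t$; taking $\langle j|\cdot|j\rangle$ recovers $P_t(a,j)$. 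For a diagonal entry, $\langle j|\rho_t|j\rangle=P_t(a,j)\to 1/m$ by Proposition~\ref{generalEqui} applied with $V=\delta_a$. For an off-diagonal entry $j\ne k$ one has $\langle j|\rho_t|k\rangle=E\big[\sum_i(Q_{\sigma_0}\cdots Q_{\sigma_{n_t-1}})(a,i)(M_t)_{ji}\overline{(M_t)_{ki}}\big]$; since $j\ne k$, at least one of $j,k$ differs from $i$, and the estimate inside the proof of Proposition~\ref{generalEqui} shows $|(M_t)_{ji}|^{2}=|\langle j|M_t|i\rangle|^{2}\to0$ a.s.\ whenever $j\ne i$, while $|(M_t)_{ki}|\le1$ and the $Q$-product entries lie in $[0,1]$ by Proposition~\ref{DoubleStochQ}; hence the integrand, a sum of $m$ terms each bounded by $1$, tends to $0$ almost surely, and the Bounded Convergence Theorem gives $\langle j|\rho_t|k\rangle\to0$. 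Thus $\Phi_t\cdots\Phi_1(|a\rangle\langle a|)\to\frac1m I=\sum_{i}\pi_i|i\rangle\langle i|$ for each $a$, and the same argument applied to the process started at time $s+1$ (shift every index; Proposition~\ref{equi} is stated for all $s\ge0$) yields the shifted version $\Phi_t\cdots\Phi_{s+1}(|i\rangle\langle i|)\to\frac1m I$ for every $s\ge0$ and every $i$, which is all Step~2 needs.

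\textbf{Step 2 (general $\rho_0$).} Writing $q=1-p$ and $\Phi_k(\rho)=q\,U_k\rho U_k^{*}+p\sum_{i}|i\rangle\langle i|U_k\rho U_k^{*}|i\rangle\langle i|$, I would decompose $\rho_t=\Phi_t\cdots\Phi_1(\rho_0)$ according to the first step $s$ at which the second alternative is chosen:
$$\rho_t=q^{t}\,U_t\cdots U_1\,\rho_0\,U_1^{*}\cdots U_t^{*}+\sum_{s=1}^{t}\sum_{i=1}^{m}w_{s,i}\;\Phi_t\cdots\Phi_{s+1}(|i\rangle\langle i|),$$
where $w_{s,i}:=pq^{s-1}\,\langle i|U_s\cdots U_1\,\rho_0\,U_1^{*}\cdots U_s^{*}|i\rangle\ge0$ does not depend on $t$ and, since conjugation by unitaries preserves the trace, $\sum_{i}w_{s,i}=pq^{s-1}$, so that $q^{t}+\sum_{s=1}^{t}\sum_{i}w_{s,i}=1$. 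In other words a single decoherence collapses $\rho_0$ onto the random state $|i\rangle\langle i|$, which is exactly where Step~1 takes over. Then, inserting $\frac1m I$ via this identity,
$$\rho_t-\tfrac1m I=q^{t}\big(U_t\cdots U_1\,\rho_0\,U_1^{*}\cdots U_t^{*}-\tfrac1m I\big)+\sum_{s=1}^{t}\sum_{i=1}^{m}w_{s,i}\big(\Phi_t\cdots\Phi_{s+1}(|i\rangle\langle i|)-\tfrac1m I\big).$$
In any matrix norm the first term is $O(q^{t})\to0$ (it vanishes for $t\ge1$ when $p=1$). For the sum, fix $\varepsilon>0$: each summand has norm at most a constant $C$ and $\sum_{s>N}\sum_i w_{s,i}\le q^{N}$, so choose $N$ with $Cq^{N}<\varepsilon/2$; then, since there are only finitely many pairs $(s,i)$ with $s\le N$, Step~1 gives $\|\Phi_t\cdots\Phi_{s+1}(|i\rangle\langle i|)-\frac1m I\|<\varepsilon/2$ for all of them once $t$ is large, and $\sum_{s\le N}\sum_i w_{s,i}\le1$ bounds that part by $\varepsilon/2$. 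Hence $\limsup_{t\to\infty}\|\rho_t-\frac1m I\|\le\varepsilon$, and letting $\varepsilon\downarrow0$ gives $\rho_t\to\frac1m I=\sum_{i=1}^{m}\pi_i|i\rangle\langle i|$.

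\textbf{Main obstacle.} The delicate point is that Proposition~\ref{generalEqui} yields information only about the diagonal entries $P_t(i,j)=\langle j|\rho_t|j\rangle$, so the off-diagonal entries and arbitrary non-basis, non-diagonal initial states require separate handling; the device that makes this routine is that one decoherence event turns any state into a diagonal one, while the geometric tail of the first-decoherence time supplies the uniformity needed to control the sum over the growing range of $s$.
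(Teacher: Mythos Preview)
Your proof is correct. Step~1 matches the paper's essentially line by line: diagonal entries via Proposition~\ref{generalEqui}, off-diagonal entries via the almost-sure vanishing of $|\langle j|M_t|i\rangle|$ for $j\ne i$ established inside that proof, then Bounded Convergence.

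In Step~2 you take a different but equally valid route. The paper keeps the first decoherence time $\sigma_1$ random inside the expectation, writes
\[
\langle j|\rho_t|k\rangle=E\Big[\sum_{j_1}v_{j_1}\,\big(Q_{\sigma_1}\cdots Q_{\sigma_{n_t-1}}W\big)(j_1,j,k)\Big],\qquad v_{j_1}=\langle j_1|U_{\sigma_1}\cdots U_1\,\rho_0\,U_1^{*}\cdots U_{\sigma_1}^{*}|j_1\rangle,
\]
and then invokes Corollary~\ref{cor:general almost sure limit} (with $s=1$) plus Dominated Convergence. You instead pull the first decoherence time out as a deterministic sum over $s$, reduce each summand to the shifted basis-state case $\Phi_t\cdots\Phi_{s+1}(|i\rangle\langle i|)$ already covered by Step~1, and control the growing range of $s$ by the geometric tail $\sum_{s>N}\sum_i w_{s,i}\le q^{N}$. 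Your version is slightly more elementary in that it never needs Corollary~\ref{cor:general almost sure limit}; the price is that you must re-run Step~1 with $U_n$ replaced by $U_{s+n}$ for each fixed $s$, which, as you correctly note, is routine since every estimate in Lemma~\ref{le:lower bound semigroup}, Proposition~\ref{equi} and Proposition~\ref{generalEqui} only improves under this shift. The paper's version stays inside a single stochastic representation and avoids that repetition. Both implementations exploit the same key observation you name as the main obstacle: one decoherence event collapses an arbitrary $\rho_0$ onto a basis state, after which the basis-state analysis applies.
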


\noindent
\begin{proof} [Proof of Theorem \ref{finalEqui}]
 Let  $0 < \zeta \le 1$ and $0< p \le 1$.

Step 1. We first consider the case $\rho_0=|i\rangle\langle i|$. 
By Proposition \ref{generalEqui}, we have  
\begin{eqnarray}
\lim_{ t \to \infty} Tr\big(|j\rangle \langle j|\rho_t\big)=\lim_{ t \to \infty} P_t(i, j)= \pi_j.
\end{eqnarray}
To prove (\ref{eq:density limit result}), it is then sufficient to show
\begin{eqnarray}\label{eq:off diagonal limit}
\lim_{ t \to \infty} Tr\big(|j\rangle \langle k|\rho_t\big)= 0,
\end{eqnarray}
for all $k \ne j$.

$$Tr\big(|k\rangle\langle j|\rho_t\big)=\langle j|\rho_t|k\rangle =\langle j|(\Phi_t\cdots \Phi_2\Phi_1|i\rangle \langle i|)|k\rangle$$
$$=\sum_{i_1,i_2,...,i_t=0}^{m}\langle j|A_{i_t}U_t\cdots A_{i_1}U_1|i\rangle\langle i|U_1^*A_{i_1}^*\cdots U_t^*A_{i_t}^*|k\rangle$$
$$=E\Big[\sum_{j_1,...,j_{N_t}=1}^{m}\langle j|U_t\cdots U_{\sigma_{N_t}+1}|j_{N_t}\rangle\langle j_{N_t}|U_{\sigma_{N_t}}\cdots U_{\sigma_{N_t-1}+1}|j_{N_t-1}\rangle\cdots$$
$$\cdots\langle j_1|U_{\sigma_1}\cdots U_2U_1|i\rangle\langle i|U_1^*U_2^*\cdots U_{\sigma_1}^*|j_1\rangle\cdots\langle j_{N_t-1}|U_{\sigma_{N_{t}-1}+1}^*\cdots U_{\sigma_{N_{t}}}^*|j_{N_t}\rangle \times$$
$$\times\langle j_{N_t}|U_{\sigma_{N_{t}}+1}^*\cdots U_{t}^*|k\rangle\Big]$$
$$=E\Big[\sum_{j_{N_t}=1}^{m}\langle j|U_t\cdots U_{\sigma_{N_t}+1}|j_{N_t}\rangle(Q_{\sigma_0}Q_{\sigma_1}\cdots Q_{\sigma_{N_t-1}})(i,j_{N_t})\times$$
$$\times\langle j_{N_t}|U_{\sigma_{N_{t}}+1}^*\cdots U_{t}^*|k\rangle\Big]$$
Let
$$ W(i,j,k):=\langle j|U_t\cdots U_{\sigma_{N_{t}}+1}|i\rangle\langle i|U_{\sigma_{N_{t}}+1}^*\cdots U_t^*|k\rangle$$
$$=\langle j|U_t\cdots U_{\sigma_{N_{t}}+1}|i\rangle\overline{\langle k|U_t\cdots U_{\sigma_{N_{t}}+1}|i\rangle}$$
So, we have that $$Tr\big(|k\rangle\langle j|\rho_t\big)=E\big[Q_{\sigma_0}Q_{\sigma_1}\cdots Q_{\sigma_{N_t-1}}W\big](i,j,k),$$
here for any matrix $Q$, we  define $$QW(i,j,l)=\sum_{l}Q(i,l)W(l,j,k)$$
Then 
$$|Q_{\sigma_0}Q_{\sigma_1}\cdots Q_{\sigma_{N_t-1}}W(i,j,k)|=\big|\sum_{l}Q_{\sigma_0}Q_{\sigma_1}\cdots Q_{\sigma_{N_t-1}}(i,l)W(l,j,k)\big|$$
$$\leq \sup_{l}||w(l,j,k)|\sum_{l}Q_{\sigma_0}Q_{\sigma_1}\cdots Q_{\sigma_{N_t-1}}(i,l)=\sup_{l}|w(l,j,k)|.$$
Since either $j$ or $k\neq l$, and $\zeta>0$, we may without loss of generality, assume that $j\neq l$. Then 
$$\big|\langle j|U_t\cdots U_{\sigma_{N_{t}+1}}|l\rangle\big|^2\rightarrow 0, a.s.$$ by (\ref{eq:W limit}). Since 
$$\big|\langle k|U_t\cdots U_{\sigma_{N_{t}}}|l\rangle\big|^2\leq 1,$$ we have 
$$\lim_{t \to \infty} \sup_{l}|w(l,j,k)|\rightarrow 0, a.s.$$
Therefore, (\ref{eq:off diagonal limit}) holds by the Dominated Convergence Theorem.

Step 2. Now we consider for general $\rho_0$.
For any $j, k$, by definition, 
$$Tr\big(|k\rangle\langle j|\rho_t\big)=\langle j|\rho_t|k\rangle =\langle j|(\Phi_t\cdots \Phi_2\Phi_1\rho_0)|k\rangle$$
$$=\sum_{i_1,i_2,...,i_t=0}^{m}\langle j|A_{i_t}U_t\cdots A_{i_1}U_1\rho_0U_1^*A_{i_1}^*\cdots U_t^*A_{i_t}^*|k\rangle$$
$$=E\Big[\sum_{j_1,...,j_{N_t}=1}^{m}\langle j|U_t\cdots U_{\sigma_{N_t}+1}|j_{N_t}\rangle\langle j_{N_t}|U_{\sigma_{N_t}}\cdots U_{\sigma_{N_t-1}+1}|j_{N_t-1}\rangle\cdots$$
$$\cdots\langle j_1|U_{\sigma_1}\cdots U_2U_1\rho_0U_1^*U_2^*\cdots U_{\sigma_1}^*|j_1\rangle\cdots\langle j_{N_t-1}|U_{\sigma_{N_{t}-1}+1}^*\cdots U_{\sigma_{N_{t}}}^*|j_{N_t}\rangle \times$$
$$\times\langle j_{N_t}|U_{\sigma_{N_{t}}+1}^*\cdots U_{t}^*|k\rangle\Big]$$
$$=E\Big[\sum_{j_{1}=1}^{m}\sum_{j_{N_t}=1}^{m}v_{j_1}\langle j|U_t\cdots U_{\sigma_{N_t}+1}|j_{N_t}\rangle(Q_{\sigma_1}\cdots Q_{\sigma_{N_t-1}})(j_1,j_{N_t})\times$$
$$\times\langle j_{N_t}|U_{\sigma_{N_{t}}+1}^*\cdots U_{t}^*|k\rangle\Big]$$
where $v_{j_1}=\langle j_1|U_{\sigma_1}\cdots U_2U_1\rho_0U_1^*U_2^*\cdots U_{\sigma_1}^*|j_1\rangle$.

So, we have that $$Tr\big(|k\rangle\langle j|\rho_t\big)=E\big[\sum_{j_{1}=1}^{m}v_{j_1}Q_{\sigma_1}\cdots Q_{\sigma_{N_t-1}}W\big](j_1,j,k).$$

If $j \ne k$, then 
$$|Q_{\sigma_1}\cdots Q_{\sigma_{N_t-1}}W(j_1,j,k)|=\big|\sum_{l}Q_{\sigma_1}\cdots Q_{\sigma_{N_t-1}}(i,l)W(l,j,k)\big|$$
$$\leq \sup_{l}||w(l,j,k)|\sum_{l}Q_{\sigma_1}\cdots Q_{\sigma_{N_t-1}}(i,l)=\sup_{l}|w(l,j,k)|.$$
Since either $j$ or $k\neq l$, and $\zeta>0$, suppose that $j\neq l$
$$\big|\langle j|U_t\cdots U_{\sigma_{N_{t}+1}}|l\rangle\big|^2\rightarrow 0$$ by (\ref{eq:W limit}) and
$$\big|\langle k|U_t\cdots U_{\sigma_{N_{t}}}|l\rangle\big|^2\leq 1,$$ we have 
$$\lim_{t \to \infty} \sup_{l}|w(l,j,k)|\rightarrow 0, a.s.$$

 If $j = k$, then 
$$\lim_{t \to \infty} Q_{\sigma_1}\cdots Q_{\sigma_{N_t-1}}W(j_1,j,k)=\pi_{j}, a.s.,$$
by Corollary \ref{cor:general almost sure limit}. 
Since $v_j$ is a probability distribution, by the Dominated Convergence Theorem, we have 
$$Tr\big(|k\rangle\langle j|\rho_t\big)=E\big[\sum_{j_{1}=1}^{m}v_{j_1}Q_{\sigma_1}\cdots Q_{\sigma_{N_t-1}}W\big](j_1,j,k) \to \pi_j.$$ 

We have thus proved the theorem

\end{proof}

\begin{remark}
	Using a similar argument, we can show that the conclusion of Theorem \ref{finalEqui} also holds for Example \ref{time-inho HadamardMC}, i.e., if $0<\zeta\leq 1$ and $0 < p \leq 1$, then
	$$ \rho_t\rightarrow\sum_{i=1}^{2}\pi_i\cdot|i\rangle\langle i|$$
	where $\pi_i=\frac{1}{2}$, for $i=1,2$.
\end{remark}
\section{Critical behaviors and simulations }\label{sec:simulations}
In this section, we consider critical behaviors and discuss their numerical  results. Our main observations are the following,
\begin{enumerate}

\item For $p =0$, our quantum Markov chain exhibits a time-inhomogeneous periodic behavior for $\zeta \le 2$ and non-periodic for $\zeta >2$. At time $n$, let $T_n$ denote the first period after time $n$ for fixed $\lambda$ and $\zeta$, we have for large $n$
 
\begin{itemize}

	\item If $\zeta <2$, 
	
	\begin{eqnarray}
	T_n \approx \Big[\big(1-\frac{\zeta}{2}\big)\frac{2\pi}{\lambda_1-\lambda_2}+n^{1-\frac{\zeta}{2}}\Big]^{\frac{1}{1-\frac{\zeta}{2}}}-n,
	\label{eq:PureQuantumPeriod01} 
	\end{eqnarray}
	\item If $\zeta=2$,
	
	\begin{eqnarray}
	T_n \approx n\cdot(e^{\frac{2\pi}{\lambda_1-\lambda_2}}-1),
	\label{eq:PureQuantumPeriod02} 
	\end{eqnarray}
\end{itemize}

where $\lambda_1>\lambda_2$ are two eigenvalues of the matrix $G=\left[
\begin{array}{ c c  }
	\lambda & \lambda \\
	\lambda & \lambda
\end{array} \right]$.
Moreover, let $T_n^{(\lambda)}$ denote the first period after time $n$ for fixed $\zeta$ depending on $\lambda$, the relation

\begin{eqnarray}
	T_n^{(\lambda)}\sim \frac{T_n^{(1)}}{\lambda}
	\label{eq:PureQuantumPeriod03} 
\end{eqnarray}
can be directly deduced by Formulas (\ref{eq:PureQuantumPeriod01}) and (\ref{eq:PureQuantumPeriod02}) using Taylor series expansion.

\item For small $p >0$, the time-inhomogeneous periodicity behaves the same manner as the case $p=0$.

\item For small $p >0$, the amplitudes of each period decays exponentially as $n$ gets large. The exponential decay rate is  approximately $p/2$, independent of the choices of $\zeta \le 2$ and $\lambda >0$.

\item For $p \approx 1$, the decay to limiting distribution is exponentially fast if $\zeta $ is small, and it is power law decay if $\zeta $ is large. The critical point $\zeta _c$ appears in the interval $[0.6, 0.7]$.

\item For $p>0$, the critical phase transition occurs when $\zeta_c=1$ for the decoherent Markov chain.

\item Instead of fully connected graph we considered, simulation results show that the decoherent system converges equilibrium limit with $N$ dimensional cyclic graph $G$. For $p=0$, time-inhomogeneous periodicity is not predictable although some kind of periodicity was observed.

\end{enumerate} 

In what follows, we will discuss the numerical results for the above observations assuming that all the entries of the matrix $G$ are $\lambda>0$ for simplicity.

\subsection{Pure quantum case, $p=0$, critical behavior at $\zeta_o=2$}\label{section:timInhomoPeriod}

Let us consider the pure quantum case when $p=0$.  For $2$ dimensional case $i,j=1,2$, recall Definition \ref{InhomoDecohOperator} and Equation \ref{InhomoDecohProbability}, the probability starting from the state $|i\rangle$ and getting the state $|j\rangle$ after $n$ steps can be deduced to

\begin{eqnarray}
P(i,j)=\big| \langle j |U_n\cdots U_1|i\rangle \big|^2,
\label{eq:PureQuantumProbability}
\end{eqnarray}
using the fact $U_n=e^{i\frac{G}{\sqrt{n^\zeta}}}$, where $G=$ is a self-adjoint matrix with non-zero entries,
$$P(i,j)=\big| \langle j |e^{iG\sum_{k=1}^{n}\frac{1}{\sqrt{k^\zeta}}}|i\rangle \big|^2=\big| \langle j |B  \left[
\begin{array}{ c c  }
e^{i\lambda_1\sum_{k=1}^{n}\frac{1}{\sqrt{k^\zeta}}} & 0 \\
0 & e^{i\lambda_2\sum_{k=1}^{n}\frac{1}{\sqrt{k^\zeta}}}
\end{array} \right]   B^*|i\rangle \big|^2,$$
where $\lambda_1$ and $\lambda_2$ are eigenvalues of $G=\left[
\begin{array}{ c c  }
	\lambda & \lambda \\
	\lambda & \lambda
\end{array} \right]$ with $\lambda_1>\lambda_2$, and B the associated orthogonal matrix. Let $$\Lambda_n=\left[
\begin{array}{ c c  }
e^{i\lambda_1\sum_{k=1}^{n}\frac{1}{\sqrt{k^\zeta}}} & 0 \\
0 & e^{i\lambda_2\sum_{k=1}^{n}\frac{1}{\sqrt{k^\zeta}}}
\end{array} \right],$$ and we have
$$P_n(i,j)= \Big|\sum_{l=1}^{2}\langle j|B|l\rangle\langle l| \Lambda_n |l\rangle \langle l|B^*|i\rangle       \Big|^2 = \Big|\sum_{l=1}^{2}\langle j|B|l\rangle\cdot\Lambda_n(l,l)\cdot \langle l|B^*|i\rangle       \Big|^2$$
$$=\Big|\langle j|B|1\rangle\cdot\Lambda_n(1,1)\cdot \langle 1|B^*|i\rangle  + \langle j|B|2\rangle\cdot\Lambda_n(2,2)\cdot \langle 2|B^*|i\rangle     \Big|^2$$
$$=\Big|\langle j|B|1\rangle\cdot e^{i\lambda_1\sum_{k=1}^{n}\frac{1}{\sqrt{k^\zeta}}}\cdot \langle 1|B^*|i\rangle  + \langle j|B|2\rangle\cdot e^{i\lambda_2\sum_{k=1}^{n}\frac{1}{\sqrt{k^\zeta}}}\cdot \langle 2|B^*|i\rangle     \Big|^2.$$
Suppose that $i=j=1$, with $B=\left[
\begin{array}{ c c  }
b_{11} & b_{12} \\
b_{21} & b_{22}
\end{array} \right]$, we obtain
$$P_n(1,1)=\Big|\langle 1|B|1\rangle\cdot e^{i\lambda_1\sum_{k=1}^{n}\frac{1}{\sqrt{k^\zeta}}}\cdot \langle 1|B^*|1\rangle  + \langle 1|B|2\rangle\cdot e^{i\lambda_2\sum_{k=1}^{n}\frac{1}{\sqrt{k^\zeta}}}\cdot \langle 2|B^*|1\rangle     \Big|^2$$
$$=\Big|B(1,1) \cdot e^{i\lambda_1\sum_{k=1}^{n}\frac{1}{\sqrt{k^\zeta}}}\cdot B^*(1,1)  + B(1,2) \cdot e^{i\lambda_2\sum_{k=1}^{n}\frac{1}{\sqrt{k^\zeta}}}\cdot B^*(2,1)     \Big|^2$$
$$=\Big|b_{11} \cdot e^{i\lambda_1\sum_{k=1}^{n}\frac{1}{\sqrt{k^\zeta}}}\cdot \bar{b}_{11}  + b_{12} \cdot e^{i\lambda_2\sum_{k=1}^{n}\frac{1}{\sqrt{k^\zeta}}}\cdot \bar{b}_{12}     \Big|^2=\Big| |b_{11}|^2 \cdot e^{i\lambda_1\sum_{k=1}^{n}\frac{1}{\sqrt{k^\zeta}}} + |b_{12}|^2 \cdot e^{i\lambda_2\sum_{k=1}^{n}\frac{1}{\sqrt{k^\zeta}}} \Big|^2$$
$$=\Big(|b_{11}|^2 \cdot e^{i\lambda_1\sum_{k=1}^{n}\frac{1}{\sqrt{k^\zeta}}} + |b_{12}|^2 \cdot e^{i\lambda_2\sum_{k=1}^{n}\frac{1}{\sqrt{k^\zeta}}}\Big)\Big(|b_{11}|^2 \cdot e^{-i\lambda_1\sum_{k=1}^{n}\frac{1}{\sqrt{k^\zeta}}} + |b_{12}|^2 \cdot e^{-i\lambda_2\sum_{k=1}^{n}\frac{1}{\sqrt{k^\zeta}}}\Big)$$
$$=|b_{11}|^4+|b_{11}|^2|b_{12}|^2 e^{i(\lambda_1-\lambda_2)\sum_{k=1}^{n}\frac{1}{\sqrt{k^\zeta}}}+|b_{12}|^4+|b_{12}|^2|b_{11}|^2 e^{i(\lambda_2-\lambda_1)\sum_{k=1}^{n}\frac{1}{\sqrt{k^\zeta}}}$$
$$=|b_{11}|^4+|b_{12}|^4+2|b_{12}|^2|b_{11}|^2\cos\big((\lambda_1-\lambda_2)\sum_{k=1}^{n}\frac{1}{\sqrt{k^\zeta}}\big)$$
$$=1-2|b_{12}|^2|b_{11}|^2\Big[1-\cos\big((\lambda_1-\lambda_2)\sum_{k=1}^{n}\frac{1}{\sqrt{k^\zeta}}\big)\Big]$$
Finally, we conclude
$$P_n(1,1)= 1-2|b_{12}|^2|b_{11}|^2\Big[1-\cos\big((\lambda_1-\lambda_2)\sum_{k=1}^{n}\frac{1}{\sqrt{k^\zeta}}\big)\Big],$$
and
$$P_n(2,1)=1-\Big(1-2|b_{12}|^2|b_{11}|^2\Big[1-\cos\big((\lambda_1-\lambda_2)\sum_{k=1}^{n}\frac{1}{\sqrt{k^\zeta}}\big)\Big]\Big)$$
$$=2|b_{12}|^2|b_{11}|^2\Big[1-\cos\big((\lambda_1-\lambda_2)\sum_{k=1}^{n}\frac{1}{\sqrt{k^\zeta}}\big)\Big].$$
We observe from above deduction that the probability at $n$ depends on the eigenvalues and eigenvectors of $G$ and $\zeta$. In our case, $B$ has non-zero entries with $\lambda_1>\lambda_2$, and the probabilities in $P_n$ has periodicity property if $\zeta\leq 2$, the series $\displaystyle\sum_{k=1}^{n}\frac{1}{\sqrt{k^\zeta}}$ is divergent as $n$ tends to infinity. On the other hand, if $\zeta>2$, $P_n$ converges, and the limit depends on the matrix $G$ and the value of $\zeta$.

\begin{figure}[h!]%
	\centering
	\subfigure[$\zeta=1$]{%
		\label{fig:QMC01a}%
		\includegraphics[width=1.9in,height=1.9in]{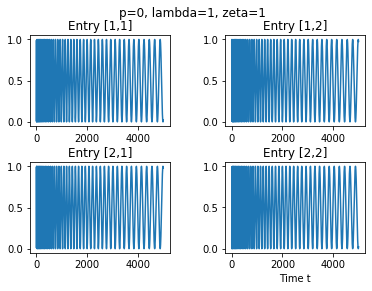}}%
	\hfill
	\subfigure[$\zeta=1.5$]{%
		\label{fig:QMC01b}%
		\includegraphics[width=1.9in,height=1.9in]{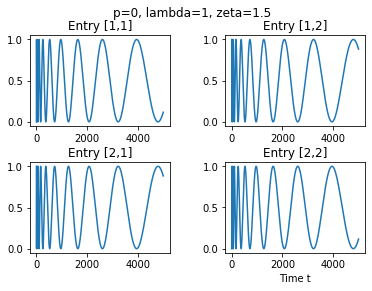}}%
	\hfill
	\subfigure[$\zeta=2$]{%
		\label{fig:QMC01c}%
		\includegraphics[width=1.9in,height=1.9in]{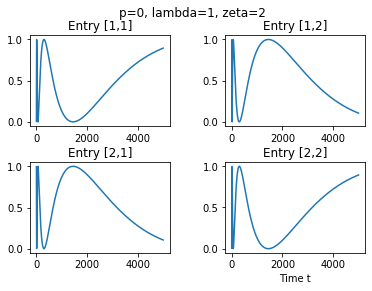}}%	
	\caption{Simulation results for $P_n$ fixing $p=0$ and $\lambda=1$ with $t=50000$}
	\label{fig:QMC01}
\end{figure}
\begin{figure}[h!]%
	\centering
	\subfigure[$\lambda=0.5$]{%
		\label{fig:QMC02a}%
		\includegraphics[width=1.9in,height=1.9in]{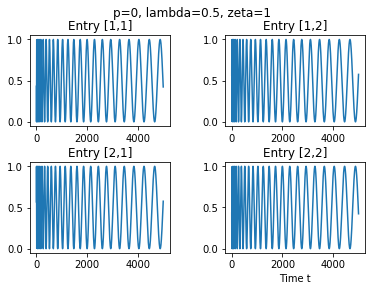}}%
	\hfill
	\subfigure[$\lambda=1.5$]{%
		\label{fig:QMC02b}%
		\includegraphics[width=1.9in,height=1.9in]{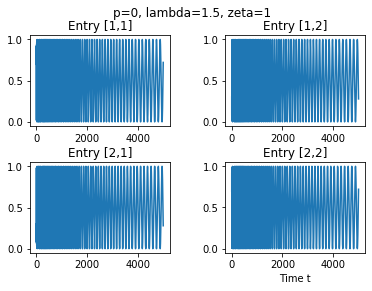}}%
	\hfill
	\subfigure[$\lambda=2$]{%
		\label{fig:QMC02c}%
		\includegraphics[width=1.9in,height=1.9in]{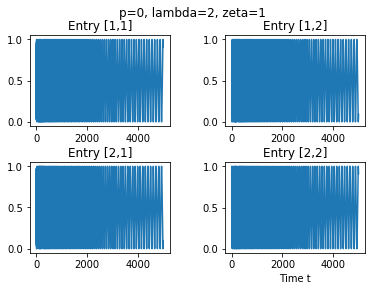}}%	
	\caption{Simulation results for $P_n$ fixing $p=0$ and $\zeta=1$ with $t=50000$}
	\label{fig:QMC02}
\end{figure} 

Figures \ref{fig:QMC01} and \ref{fig:QMC02} show the periodicity behavior of the $2 \times 2$ matrix $P_n$ by fixing $\zeta$ and $\lambda$ with respectively. We observe from Figure \ref{fig:QMC01} that with fixed $\lambda$, the periodicity of $P_n$ decreases if $\zeta$ increases. On the other hand, Figure \ref{fig:QMC02} illustrates the phenomenon that with fixed $\zeta$, the periodicity of $P_n$ increases if $\lambda$ increases.

To analyze the periodicity of $P_n$ when $\zeta\leq 2$, we use the fact that $\cos(\cdot)$ has period $2\pi$, and let $T_n$ be the period for fixed $n$.
$$(\lambda_1-\lambda_2)\sum_{k=n+1}^{n+T_n}\frac{1}{\sqrt{k^\zeta}}=2\pi.$$
Asymptotically, we have for large $n$
$$\int_{n}^{n+T_n}\frac{1}{\sqrt{x^\zeta}}dx \approx \frac{2\pi}{\lambda_1-\lambda_2}.$$
If $\zeta<2$, it can be deduced to
$$\Big[\frac{x^{1-\frac{\zeta}{2}}}{1-\frac{\zeta}{2}}\Big]_{x=n}^{x=n+T_n}\approx\frac{2\pi}{\lambda_1-\lambda_2}\implies \frac{(n+T_n)^{1-\frac{\zeta}{2}}}{1-\frac{\zeta}{2}}-\frac{n^{1-\frac{\zeta}{2}}}{1-\frac{\zeta}{2}}\approx\frac{2\pi}{\lambda_1-\lambda_2},$$
then, we have
$$(n+T_n)^{1-\frac{\zeta}{2}}-n^{1-\frac{\zeta}{2}}\approx(1-\frac{\zeta}{2})\frac{2\pi}{\lambda_1-\lambda_2},$$
and
\begin{eqnarray}
T_n \approx \Big[\big(1-\frac{\zeta}{2}\big)\frac{2\pi}{\lambda_1-\lambda_2}+n^{1-\frac{\zeta}{2}}\Big]^{\frac{1}{1-\frac{\zeta}{2}}}-n.
\label{eq:PureQuantumPeriod1} 
\end{eqnarray}
If $\zeta=2$, we have
$$\Big[\ln|x|\Big]_{x=n}^{x=n+T_n}\approx\frac{2\pi}{\lambda_1-\lambda_2} \implies \ln|n+T_n|-\ln|n|\approx\frac{2\pi}{\lambda_1-\lambda_2},$$
then,
\begin{eqnarray}
T_n \approx n\cdot(e^{\frac{2\pi}{\lambda_1-\lambda_2}}-1).
\label{eq:PureQuantumPeriod2} 
\end{eqnarray}

Figure \ref{fig:QMC03} compares the periods generated by simulation and Formula (\ref{eq:PureQuantumPeriod1}) with different values of $\lambda$ and $\zeta$. As results, we observe that the ratios between the simulated periods and theoretical periods are $1$ as $n$ is large, which numerically proved the asymptotic formulas discussed.

\begin{figure}[h!]%
	\centering
	\subfigure[$\lambda=1$ and $\zeta=0.8$]{%
		\label{fig:QMC03a}%
		\includegraphics[width=1.4in,height=2.1in]{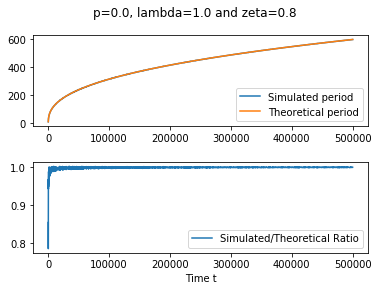}}%
	\hfill
	\subfigure[$\lambda=1$ and $\zeta=1.2$]{%
		\label{fig:QMC03b}%
		\includegraphics[width=1.4in,height=2.1in]{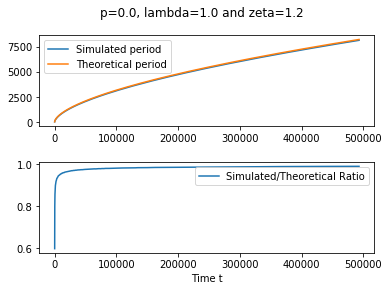}}%
	\hfill
	\subfigure[$\lambda=0.5$ and $\zeta=1.2$]{%
		\label{fig:QMC03c}%
		\includegraphics[width=1.4in,height=2.1in]{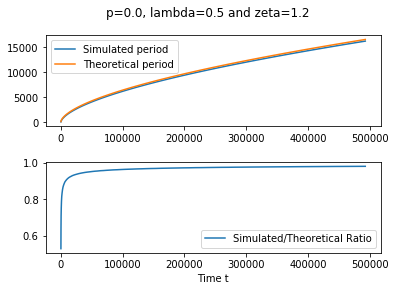}}%
	\hfill
	\subfigure[$\lambda=1.5$ and $\zeta=1.2$]{%
		\label{fig:QMC03d}%
		\includegraphics[width=1.4in,height=2.1in]{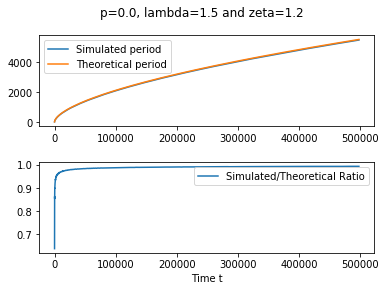}}%	
	\caption{Simulated vs theoretical periods for $P_n$ with $t=50000$}
	\label{fig:QMC03}
\end{figure}

%Additionally, from Figure \ref{fig:QMC01} and \ref{fig:QMC02}
\subsection{Periodicity and decay rates for decoherent quantum Markov chain, $p>0$}
Previously, we have numerically shown the periodic behavior for the pure quantum Markov chains using Formulas (\ref{eq:PureQuantumPeriod1}), (\ref{eq:PureQuantumPeriod2}). For $p>0$, despite the equilibrium convergence property proved in Theorem \ref{finalEqui}, we numerically analyze the periodicity and convergence rates to equilibrium for decoherent Markov chains.

\begin{figure}[h!]%
	\centering
	\subfigure[$\lambda= 0.5$ and $\zeta=0.4$]{%
		\label{fig:QMC04a}%
		\includegraphics[width=1.9in,height=1.9in]{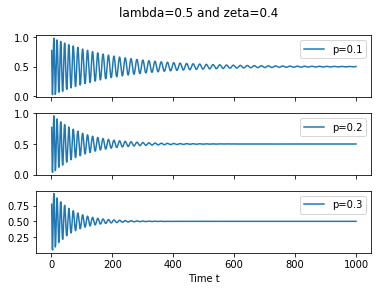}}%
	\hfill
	\subfigure[$\lambda= 0.5$ and $\zeta=0.6$]{%
		\label{fig:QMC04b}%
		\includegraphics[width=1.9in,height=1.9in]{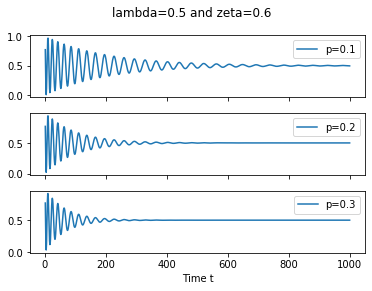}}%
	\hfill
	\subfigure[$\lambda= 0.5$ and $\zeta=0.8$]{%
		\label{fig:QMC04c}%
		\includegraphics[width=1.9in,height=1.9in]{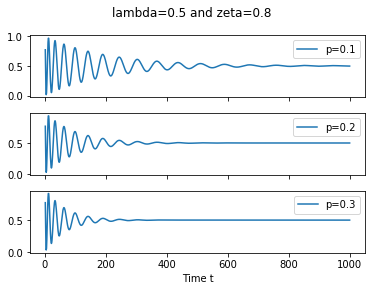}}%	
	\caption{Probabilities for $P_n(1,1)$ with $t=1000$}
	\label{fig:QMC04}
\end{figure}

Regarding the periodicity of the decoherent system, if $p$ is close to $0$, the probability to make measurements is low, the system maintains more quantum behavior by intuition. Conversely, the system tends to have less quantum coherence while $p$ increases. Simulation results not only accord with above observation, but also demonstrate that only small $p$'s conserve notable quantum periodic behavior (see Figure \ref{fig:QMC04}). Furthermore, the results show that for larger $\zeta$'s, fixed $p$'s and $\lambda$'s, the periodicity decreases similarly as the coherent case.

\begin{figure}[h!]%
	\centering
	\subfigure[$\lambda= 0.5$ and $\zeta=0.4$]{%
		\label{fig:QMC05a}%
		\includegraphics[width=1.9in,height=1.9in]{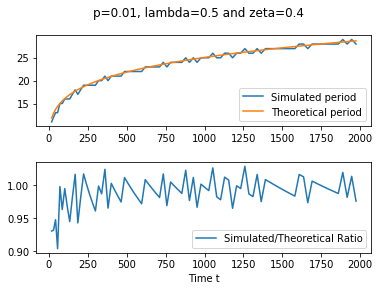}}%
	\hfill
	\subfigure[$\lambda= 0.5$ and $\zeta=0.6$]{%
		\label{fig:QMC05b}%
		\includegraphics[width=1.9in,height=1.9in]{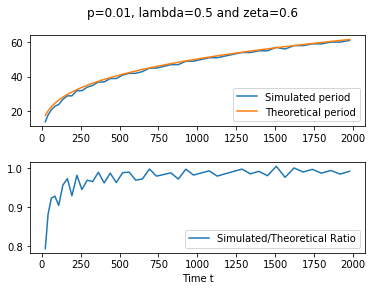}}%
	\hfill
	\subfigure[$\lambda= 0.5$ and $\zeta=0.8$]{%
		\label{fig:QMC05c}%
		\includegraphics[width=1.9in,height=1.9in]{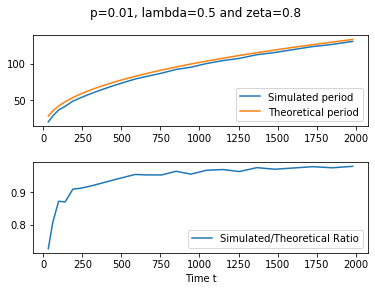}}%	
	\caption{Simulated vs pure quantum theoretical periods for $P_n$ with $p=0.01$ with $t=2000$}
	\label{fig:QMC05}
\end{figure}
\begin{figure}[h!]%
	\centering
	\subfigure[$\lambda= 0.5$ and $\zeta=0.4$]{%
		\label{fig:QMC06a}%
		\includegraphics[width=1.9in,height=1.9in]{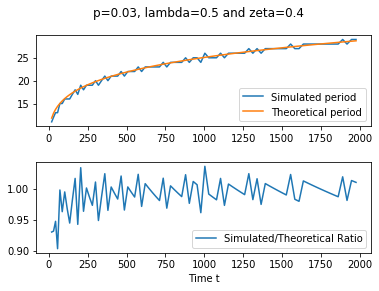}}%
	\hfill
	\subfigure[$\lambda= 0.5$ and $\zeta=0.6$]{%
		\label{fig:QMC06b}%
		\includegraphics[width=1.9in,height=1.9in]{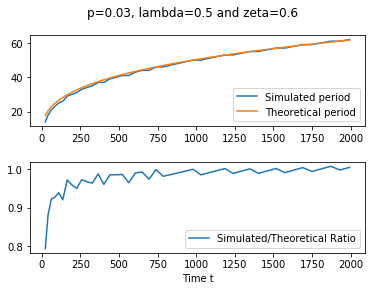}}%
	\hfill
	\subfigure[$\lambda= 0.5$ and $\zeta=0.8$]{%
		\label{fig:QMC06c}%
		\includegraphics[width=1.9in,height=1.9in]{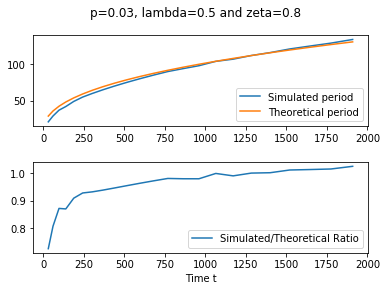}}%	
	\caption{Simulated vs pure quantum theoretical periods for $P_n$ with $p=0.03$ with $t=2000$}
	\label{fig:QMC06}
\end{figure} 

We also calculate the ratios between the simulated periodicity and the theoretical periodicity from Formula (\ref{eq:PureQuantumPeriod1}) fixing different values of $p$ and $\zeta$. As results, Figures \ref{fig:QMC05} and \ref{fig:QMC06} show that the ratios are asymptotically $1$. In other words, even for $p>0$, the decoherent system maintains the same periodicity as in pure quantum case for small $p$.

In order to analyze the convergence rates for small $p$ and different values of $\zeta$'s and $\lambda$'s, we fit the local maximums of the $P_n(1,1)$ in non-linear regression with exponential decay model: $ce^{-rt}-\frac{1}{2}$, and rational decay model: $ct^{-r}-\frac{1}{2}$ for $p$ near $0$. Since for every choice of $\lambda$ and $\zeta$, the number of local maximums are different, we consider the adjusted $R^2$ coefficients to perform the comparison. Consequently, for all parameters mentioned above, not only the adjusted $R^2$ of exponential models with decay rate $r$ are greater than rational models' adjusted $R^2$, but also they are all greater than $0.9$. Even more, we notice that the linear relationship between $r$ and $p$ when $p$ tends to $0$ for all $0\leq\zeta\leq1$, which is, the convergence rates are proportional to $p$ and they are independent of $\zeta$ and $\lambda$ as $p$ tends to $0$.

More specifically, Tables \ref{tab:QMC01}, \ref{tab:QMC02} and \ref{tab:QMC03} show the convergence rates obtained from exponential decay non-linear regression model for $\lambda= 0.2$, $0.35$ and $0.5$, and we numerically conclude that the decay rates $r \sim \frac{p}{2}$ as $p$ approaches $0$. In other words, when environmental interaction is extremely low, the convergence to equilibrium does not depend on the parameters $\zeta$ and $\lambda$.

\begin{table}[h!]
	\centering%\small
	\footnotesize
	\begin{tabular}{ |c||c|c|c|c|c|c|c|c|c|c|} 
		\hline
		$p  \  \backslash  \  \zeta$ & $0.1$ & $0.2$ & $0.3$ & $ 0.4$  & $ 0.5$ & $0.6$ & $0.7$ & $0.8$ & $0.9$ & $1$ \\ 
		\hline
		$0.005$ & 0.0025 & 0.0025 & 0.0025 & 0.0025   & 0.0025 & 0.0025 & 0.0025 & 0.0025 & 0.0025 & 0.0025  \\ 
		
		$0.01$ & 0.005 & 0.005 & 0.005 & 0.005   & 0.005 & 0.005 & 0.005 & 0.005 & 0.005 & 0.0051 \\  
		
		$0.015$ & 0.0075 & 0.0075 & 0.0075 & 0.0077   & 0.0077 & 0.0077 & 0.0077 & 0.0077 & 0.0076 & 0.0077 \\  
		
		$0.02$ & 0.0101 & 0.0101 & 0.0101 & 0.0101   & 0.0101 & 0.0102 & 0.0102 & 0.0101 & 0.0102 & 0.0103 \\  
		
		$0.025$ & 0.0126 & 0.0126 & 0.0127 & 0.0127  & 0.0127 & 0.0127 & 0.0128 & 0.0127 & 0.0128 & 0.0129 \\  
		
		%$0.03$ & 0.0151 & 0.0153 & 0.015 & 0.0152   & 0.0152 & 0.015 & 0.015 &  0.0152  &  0.0153  & 0.0151 \\  
		\hline
	\end{tabular}
	\caption{Estimated convergence rates $r$ for $\lambda=0.2$ by non-linear regression model: $ce^{-rt}$}
	\label{tab:QMC01}

\end{table}
\begin{table}[h!]
	\centering%\small
	\footnotesize
	\begin{tabular}{ |c||c|c|c|c|c|c|c|c|c|c|} 
		\hline
		$p  \  \backslash  \  \zeta$ & $0.1$ & $0.2$ & $0.3$ & $ 0.4$  & $ 0.5$ & $0.6$ & $0.7$ & $0.8$ & $0.9$ & $1$ \\ 
		\hline
		$0.005$ & 0.0025 & 0.0025 & 0.0025 & 0.0025   & 0.0025 & 0.0025 & 0.0025 & 0.0025 & 0.0025 & 0.0025  \\ 
		
		$0.01$ & 0.005 & 0.005 & 0.005 & 0.005   & 0.005 & 0.005 & 0.005 & 0.005 & 0.005 & 0.005 \\  
		
		$0.015$ & 0.0075 & 0.0075 & 0.0075 & 0.0075   & 0.0075 & 0.0075 & 0.0076 & 0.0076 & 0.0076 & 0.0076 \\  
		
		$0.02$ & 0.0101 & 0.01 & 0.01 & 0.01   & 0.0101 & 0.0101 & 0.0101 & 0.0101 & 0.0102 & 0.0102 \\  
		
		$0.025$ & 0.0126 & 0.0124 & 0.0125 & 0.0125  & 0.0126 & 0.0126 & 0.0127 & 0.0127 & 0.0128 & 0.0126 \\  
		\hline
		
	\end{tabular}
	\caption{Estimated convergence rates $r$ for $\lambda=0.35$ by non-linear regression model: $ce^{-rt}$}
	\label{tab:QMC02}

\end{table}
\begin{table}[h!]
	\centering%\small
	\footnotesize
	\begin{tabular}{ |c||c|c|c|c|c|c|c|c|c|c|} 
		\hline
		$p  \  \backslash  \  \zeta$ & $0.1$ & $0.2$ & $0.3$ & $ 0.4$  & $ 0.5$ & $0.6$ & $0.7$ & $0.8$ & $0.9$ & $1$ \\ 
		\hline
		$0.005$ & 0.0025 & 0.0025 & 0.0025 & 0.0025   & 0.0025 & 0.0025 & 0.0025 & 0.0025 & 0.0025 & 0.0025  \\ 
		
		$0.01$ & 0.005 & 0.005 & 0.005 & 0.005   & 0.005 & 0.005 & 0.005 & 0.005 & 0.005 & 0.005 \\  
		
		$0.015$ & 0.0075 & 0.0075 & 0.0075 & 0.0075   & 0.0075 & 0.0075 & 0.0075 & 0.0075 & 0.0076 & 0.0075 \\  
		
		$0.02$ & 0.0101 & 0.0101 & 0.01 & 0.0101   & 0.0101 & 0.01 & 0.01 & 0.01 & 0.0101 & 0.0101 \\  
		
		$0.025$ & 0.0126 & 0.0127 & 0.0125 & 0.0127  & 0.0127 & 0.0125 & 0.0125 & 0.0126 & 0.0127 & 0.0125 \\  
		
		%$0.03$ & 0.0151 & 0.0153 & 0.015 & 0.0152   & 0.0152 & 0.015 & 0.015 &  0.0152  &  0.0153  & 0.0151 \\  
		\hline
	\end{tabular}
	\caption{Estimated convergence rates $r$ for $\lambda=0.5$ by non-linear regression model: $ce^{-rt}$}
	\label{tab:QMC03}
\end{table}

On the other hand, we similarly fit the probabilities $P_n(1,1)$ with $n=200$ in non-linear regression with exponential decay model: $ce^{-rt}-\frac{1}{2}$, and rational decay model: $ct^{-r}-\frac{1}{2}$ for different values of $\lambda$, $\zeta$ and $p$ close to $1$ where the probability of measurement is high. In this case, we compare the $R^2$ coefficients for above parameters since there have no local maximums for $p$ close to $1$. Tables \ref{tab:QMC04}, \ref{tab:QMC05} and \ref{tab:QMC06} show that for $\lambda = 0.2$, $0.35$, $0.5$ and $p$ close to $1$, exponential decay model fits better for $0<\zeta\leq 0.6$ with  $R^2$ coefficients greater than $0.9$ while rational decay model fits better for $0.7 <\zeta\leq 1 $ with $R^2>0.9$. Therefore, not only we notice that the decay behavior is very different than when $p$ is close to 0, but also there is a phase of transition $\zeta_o$ such that $0.6\leq\zeta\leq0.7$ for $p\sim 1$.

Indeed, we also observe from these tables that the convergence rates decrease in function of $p$ when $p$ tends to $0$, and the convergence rates increase when $p$ tends to $1$.

\begin{table}[h!]
	\centering\footnotesize
	
	\begin{tabular}{ |c||c|c|c|c|c|c|c|c|c|c|} 
		\hline
		$p  \  \backslash  \  \zeta$ & $0.1$ & $0.2$ & $0.3$ & $ 0.4$  & $ 0.5$ & $0.6$ & $0.7$ & $0.8$ & $0.9$ & $1$ \\ 
		\hline
		
		$0.7$ &  &  & 1 &    &  &  &  &  & 1 & 2 \\ 
		
		$0.8$ & 1 &  & 1 & 1   & 1 & 1 & 2 & 2 & 2 & 2 \\  
		
		$0.9$ & 1 & 1 &  & 1   & 1 & 2 & 2 & 2 & 2 & 2 \\  
		
		$1$ & 1 & 1 & 1 & 1  & 1 & 1 & 2 & 2 & 2 & 2 \\  
		
		\hline
	\end{tabular}

	\

	\centerline{Model Fitness: 1 for Exponential decay model and 2 for Rational decay model}
	
	\
	
	\begin{minipage}{.4\linewidth}
		\footnotesize
		
		\begin{tabular}{ |c||c|c|c|c|} 
			\hline
			$\zeta  \  \backslash  \  p $ & $0.7$ & $0.8$ & $0.9$ & $1$ \\ 
			\hline
			
			$0.1$ &   & 1.33 & 0.84 & 0.54 \\
			
			$0.2$ &   &  & 0.74 & 0.47 \\
			
			$0.3$ &  1.65 & 1.06 &  & 0.40 \\
			
			$0.4$ &   & 0.94 & 0.56 & 0.34 \\
			
			$0.5$ &   & 0.83 & 0.47 & 0.27 \\
			
			$0.6$ &   & 0.72 &  & 0.21 \\
			
			$0.7$ &   &  &  &  \\  
			
			$0.8$ & &  &  &  \\  
			
			$0.9$ & 0.84 &  &  &  \\  
			
			$1$ &  &  &  &  \\  
			
			\hline
		\end{tabular}
	
	\
	
		\centerline{Exponential Decay Rates}
	\end{minipage}\ \ \ \ \ \
	\begin{minipage}{.4\linewidth}
		\footnotesize
		
		\begin{tabular}{ |c||c|c|c|c|} 
			\hline
			$\zeta  \  \backslash  \  p $ & $0.7$ & $0.8$ & $0.9$ & $1$ \\ 
			\hline
			
			$0.1$ &   &  &  &  \\
			
			$0.2$ &   &  &  &  \\
			
			$0.3$ &   &  &  &  \\
			
			$0.4$ &   &  &  &  \\
			
			$0.5$ &   &  &  &  \\
			
			$0.6$ &   &  & 1.11 &  \\
			
			$0.7$ &  & 1.32 & 1.01 & 0.83 \\  
			
			$0.8$ &  & 1.18 & 0.89 & 0.72 \\  
			
			$0.9$ &  & 1.02 & 0.75 & 0.6 \\  
			
			$1$ & 1.4 & 0.84 & 0.61 & 0.47 \\  
			
			\hline
		\end{tabular}
	
	\
	
		\centerline{Rational Decay Rates}
	\end{minipage}
\caption{Estimated convergence rates $r$ for $\lambda=0.5$ by non-linear regression model}
\label{tab:QMC04}
\end{table}

\begin{table}[h!]
	\centering\footnotesize
	
	\begin{tabular}{ |c||c|c|c|c|c|c|c|c|c|c|} 
		\hline
		$p  \  \backslash  \  \zeta$ & $0.1$ & $0.2$ & $0.3$ & $ 0.4$  & $ 0.5$ & $0.6$ & $0.7$ & $0.8$ & $0.9$ & $1$ \\ 
		\hline
		
		$0.6$ &  &  & 1 &    &1  & 1 & 1 & 2 & 2 & 2 \\ 
		
		$0.7$ &  & 1 & 1 &  1  & 1 & 1 & 2 & 2 & 2 & 2 \\ 
		
		$0.8$ &  &  & 1 & 1   & 1 & 1 & 2 & 2 & 2 & 2 \\  
		
		$0.9$ &  & 1 & 1 & 1   & 1 & 1 & 2 & 2 & 2 & 2 \\  
		
		$1$ & 1 & 1 & 1 & 1  & 1 & 1 & 2 & 2 & 2 & 2 \\  
		
		\hline
	\end{tabular}
	
	\
	
	\centerline{Model Fitness: 1 for Exponential decay model and 2 for Rational decay model}
	
	\
	
	\begin{minipage}{.4\linewidth}
		\centering\footnotesize
		
		\begin{tabular}{ |c||c|c|c|c|c|} 
			\hline
			$\zeta  \  \backslash  \  p $& $0.6$ & $0.7$ & $0.8$ & $0.9$ & $1$ \\ 
			\hline
			
			$0.1$ &       &      &       &       & 0.23 \\
			
			$0.2$ &       &      &       & 0.25 & 0.19 \\
			
			$0.3$ & 0.64 &0.45 & 0.3 & 0.21 & 0.15 \\
			
			$0.4$ &       &0.38 & 0.24 & 0.16 & 0.11 \\
			
			$0.5$ & 0.49 &0.31 & 0.19 & 0.12 & 0.08 \\
			
			$0.6$ & 0.42 &0.24 & 0.14 & 0.08 & 0.05\\
			
			$0.7$ & 0.34 &  &  &  &  \\  
			
			$0.8$ & &  &  &  &  \\  
			
			$0.9$ &  &  &  &  &  \\  
			
			$1$ &  &  &  &  &  \\  
			
			\hline
		\end{tabular}
	
	\
	
	\centerline{Exponential Decay Rates}
	\end{minipage}\ \ \ \ \ \ \ \ \ \ \ \ \ \ \ \
	\begin{minipage}{.4\linewidth}
		\centering\footnotesize
		
		\begin{tabular}{ |c||c|c|c|c|c|} 
			\hline
			$\zeta  \  \backslash  \  p $& $0.6$ & $0.7$ & $0.8$ & $0.9$ & $1$ \\ 
			\hline
			
			$0.1$ &  &  &  &  &  \\
			
			$0.2$ &  &  &  &  &  \\
			
			$0.3$ &  &  &  &  &  \\
			
			$0.4$ &  & &  &  &  \\
			
			$0.5$ &  & &  &  &  \\
			
			$0.6$ &  & &  &  &  \\
			
			$0.7$ &  &0.85 & 0.71 & 0.61 & 0.52 \\  
			
			$0.8$ & 0.93 & 0.73 & 0.6 & 0.49 & 0.41 \\  
			
			$0.9$ & 0.79 & 0.6   &   0.48 & 0.39 & 0.32\\  
			
			$1$   & 0.64 & 0.47 & 0.37 & 0.29 & 0.23 \\  
			
			\hline
		\end{tabular}
		
		\
		
		\centerline{Rational Decay Rates}
	\end{minipage}

\caption{Estimated convergence rates $r$ for $\lambda=0.35$ by non-linear regression model}
\label{tab:QMC05}
\end{table}

\begin{table}[h!]
	\centering\footnotesize
	
	\begin{tabular}{ |c||c|c|c|c|c|c|c|c|c|c|} 
		\hline
		$p  \  \backslash  \  \zeta$ & $0.1$ & $0.2$ & $0.3$ & $ 0.4$  & $ 0.5$ & $0.6$ & $0.7$ & $0.8$ & $0.9$ & $1$ \\ 
		\hline
		
		$0.6$ & 1 & 1 & 1 & 1 &1  & 1 & 2 & 2 & 2 & 2 \\ 
		
		$0.7$ & 1 & 1 & 1 & 1 & 1 & 1 & 2 & 2 & 2 & 2 \\ 
		
		$0.8$ & 1 & 1 & 1 & 1 & 1 & 1 & 2 & 2 & 2 & 2 \\  
		
		$0.9$ & 1 & 1 & 1 & 1 & 1 & 1 & 2 & 2 & 2 & 2 \\  
		
		$1$ & 1 & 1 & 1 & 1  & 1 & 1 & 2 & 2 & 2 & 2 \\  
		
		\hline
	\end{tabular}
	
	\
	
	\centerline{Model Fitness: 1 for Exponential decay model and 2 for Rational decay model}
	
	\
	
	\begin{minipage}{.4\linewidth}
		\centering\footnotesize
		
		\begin{tabular}{ |c||c|c|c|c|c|} 
			\hline
			$\zeta  \  \backslash  \  p $& $0.6$ & $0.7$ & $0.8$ & $0.9$ & $1$ \\ 
			\hline
			 
			$0.1$ & 0.19  & 0.13 & 0.1   & 0.08 & 0.06 \\
			
			$0.2$ & 0.15  & 0.11  & 0.08  & 0.06 & 0.05 \\
			
			$0.3$ & 0.11  & 0.08  & 0.06  & 0.04 & 0.03 \\
			
			$0.4$ & 0.08  & 0.05 &  0.04  & 0.03 & 0.02 \\
			
			$0.5$ & 0.05  &0.03 & 0.02 & 0.02 & 0.01 \\
			
			$0.6$ & 0.03  &0.02 & 0.01 & 0.01 & 0.01\\
			
			$0.7$ &  &  &  &  &  \\  
			
			$0.8$ & &  &  &  &  \\  
			
			$0.9$ &  &  &  &  &  \\  
			
			$1$ &  &  &  &  &  \\  
			
			\hline
		\end{tabular}
		
		\
		
		\centerline{Rxponential Decay Rates}
	\end{minipage}\ \ \ \ \ \ \ \ \ \ \ \ \ \ \ \
	\begin{minipage}{.4\linewidth}
		\centering\footnotesize
		
		\begin{tabular}{ |c||c|c|c|c|c|} 
			\hline
			$\zeta  \  \backslash  \  p $& $0.6$ & $0.7$ & $0.8$ & $0.9$ & $1$ \\ 
			\hline
			
			$0.1$ &  &  &  &  &  \\
			
			$0.2$ &  &  &  &  &  \\
			
			$0.3$ &  &  &  &  &  \\
			
			$0.4$ &  & &  &  &  \\
			
			$0.5$ &  & &  &  &  \\
			
			$0.6$ &  & &  &  &  \\
			
			$0.7$ & 0.44 & 0.36 & 0.3 & 0.25 & 0.21 \\  
			
			$0.8$ & 0.34 & 0.27 & 0.22  & 0.18 & 0.15 \\  
			
			$0.9$ & 0.26 & 0.2  & 0.16 & 0.13 & 0.11\\  
			
			$1$   & 0.19 & 0.15 & 0.12 & 0.1 & 0.08 \\  
			
			\hline
		\end{tabular}
		
		\
		
		\centerline{Rational Decay Rates}
	\end{minipage}
	
\caption{Estimated convergence rates $r$ for $\lambda=0.2$ by non-linear regression model}
\label{tab:QMC06}
\end{table}

\subsection{High dimensional quantum Markov chain for dimension $N>2$}
It would be interesting to see if our observations above also hold for higher dimensional quantum Markov chain.
 
For the pure quantum case, we first observe from the simulation results that periodicity behavior remains and the probability will concentrates on the diagonal of the matrix $P_n$ varying from $\frac{1}{N}$ to $1$ where $N$ is the dimension of the space and $P_n$ is a $N \times N$ matrix, and the off-diagonal entries vary from $0$ to $\frac{1}{N}$ due to the symmetry property (See Figure \ref{fig:QMC07}). 

On the other hand, the periodicity for high dimensional Markov chains also follows the theoretical formula (\ref{eq:PureQuantumPeriod1}) for 2 dimensional case. Figure \ref{fig:QMC08} shows that the ratio between the periods simulated and the theoretical period from (\ref{eq:PureQuantumPeriod1}) is asymptotically $1$ for $\lambda = 1$, $\zeta=1$ and space dimension $N=3,4,5$.

\begin{figure}[h!]%
	\centering
	\subfigure[$N=3$]{%
		\label{fig:QMC07a}%
		\includegraphics[width=2.7in,height=2.7in]{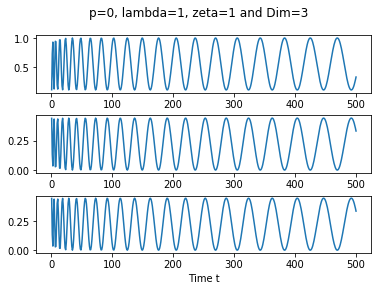}}%
	\hfill
	\subfigure[$N=5$]{%
		\label{fig:QMC07b}%
		\includegraphics[width=2.7in,height=2.7in]{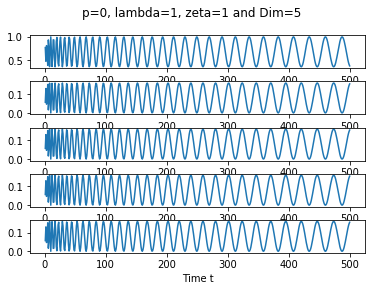}}%
	
	\caption{Simulated probabilities for $P_n(1,:)$ for $p=0$, $\lambda=1$, $\zeta=1$, dimension $N$ and $t=500$}
	\label{fig:QMC07}
\end{figure}

\begin{figure}[h!]%
	\centering
	\subfigure[$N=3$]{%
		\label{fig:QMC08a}%
		\includegraphics[width=1.9in,height=1.9in]{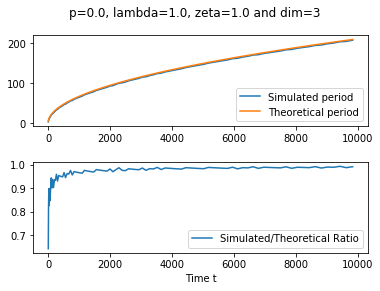}}%
	\hfill
	\subfigure[$N=4$]{%
		\label{fig:QMC08b}%
		\includegraphics[width=1.9in,height=1.9in]{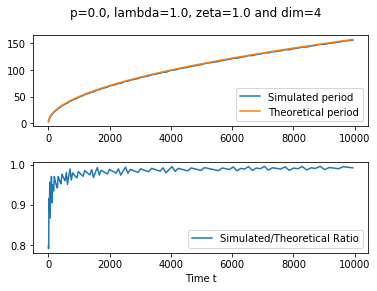}}%
	\hfill
	\subfigure[$N=5$]{%
		\label{fig:QMC08c}%
		\includegraphics[width=1.9in,height=1.9in]{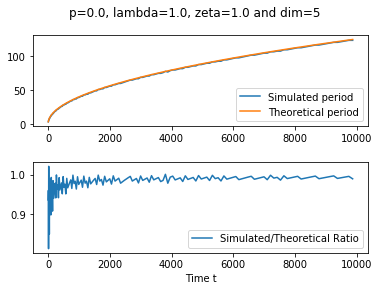}}%	
	\caption{Simulated vs pure quantum theoretical periods for $P_n$, $\lambda=1$ and $\zeta=1$, dimension $N$ with $t=2000$}
	\label{fig:QMC08}
\end{figure}

Consider now the decoherent case when $p$ near $0$. Unlike the symmetrical decay to the equilibrium of the $2$ dimensional case, the local maximums of the diagonal decrease to the equilibrium limit from $1$ to $\frac{1}{N}$ while the local minimums maintain near $\frac{1}{N}$. On the other hand, for the off-diagonal entries, the local minimums increase to the equilibrium limit from $0$ to $\frac{1}{N}$. Figure \ref{fig:QMC09} illustrates the $4$ dimensional decoherent Markov chain convergence to the equilibrium of the first row of the matrix $P_n$ when $p=0.005$, $\lambda=1$, $\zeta=0.5$ and $t=1000$.

\begin{figure}[h!]%
	\centering
	
	\includegraphics[width=2.5in,height=2.5in]{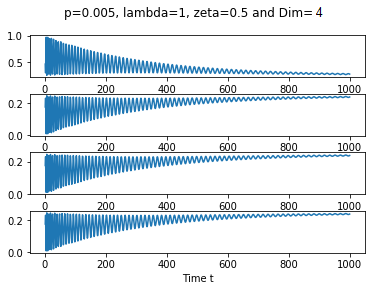}%
	
	\caption{Simulated probabilities for $P_n(1,:)$ for $p=0.005$, $\lambda=1$, $\zeta=0.5$, dimension $N=4$ and $t=1000$}
	\label{fig:QMC09}
\end{figure}

Similarly, we fit the local maximums of the probabilities at the diagonal entries of the matrix $P_n$ in non-linear regression with exponential decay model: $ce^{-rt}-\frac{1}{N}$ and rational decay model: $ct^{-rt}-\frac{1}{N}$ to estimate the decay rates, and compare their adjusted $R^2$ coefficients. As results, the exponential decay model has better fit than rational model with adjusted $R^2$ coefficients greater than $0.9$ for different values of $\zeta$ and $\lambda$ for $p\sim 0$. Even further, the convergence rates near $p\sim 0$ has the similar behavior as $2$ dimensional case, they are proportional to $p$ by the relation $r=\frac{p}{2}$, and independent of the other parameters (See Table \ref{tab:QMC07}).

\begin{table}[h!]
	\centering%\small
	\footnotesize
	\begin{tabular}{ |c||c|c|c|c|c|c|c|c|c|c|} 
		\hline
		$p  \  \backslash  \  \zeta$ & $0.1$ & $0.2$ & $0.3$ & $ 0.4$  & $ 0.5$ & $0.6$ & $0.7$ & $0.8$ & $0.9$ & $1$ \\ 
		\hline
		$0.005$ & 0.0024 & 0.0024 & 0.0025 & 0.0025   & 0.0025 & 0.0025 & 0.0025 & 0.0025 & 0.0025 & 0.0025  \\ 
		
		$0.01$ & 0.0047 & 0.0048 & 0.0049 & 0.005   & 0.005 & 0.0051 & 0.005 & 0.0051 & 0.0051 & 0.005 \\  
		
		$0.015$ & 0.0071 & 0.0072 & 0.0074 & 0.0075   & 0.0076 & 0.0075 & 0.0076 & 0.0075 & 0.0076 & 0.0075 \\  
		
		$0.02$ & 0.0095 & 0.0095 & 0.0099 & 0.0099   & 0.01 & 0.0102 & 0.0101 & 0.0101 & 0.0101 & 0.01 \\  
		
		$0.025$ & 0.0119 & 0.0119 & 0.0124 & 0.0125  & 0.0126 & 0.0128 & 0.0126 & 0.0127 & 0.0126 & 0.0125 \\  
		
		%$0.03$ & 0.0151 & 0.0153 & 0.015 & 0.0152   & 0.0152 & 0.015 & 0.015 &  0.0152  &  0.0153  & 0.0151 \\  
		\hline
	\end{tabular}
	\caption{Estimated $5$ dimensional convergence rates $r$ for $\lambda=0.5$  by non-linear regression model: $ce^{-rt}$}
	\label{tab:QMC07}
\end{table}

For $p$ close to $1$, we also fit the diagonal entries of the matrix $P_n$ in the same non-linear regression model, the same phase of transition point we obtained in the $2$ dimensional case also occurs in high dimensional case. Table \ref{tab:QMC08} shows the best fit non-linear regression results: exponential decay model or rational decay model with $R^2$ coefficients greater than $0.9$ and their estimated rates $r$ with respectively as before, and we numerically obtained that the phase of transition is between $0.6<\zeta<0.7$ (See Table \ref{tab:QMC08}).

\begin{table}[h!]
	\centering\footnotesize
	
	\begin{tabular}{ |c||c|c|c|c|c|c|c|c|c|c|} 
		\hline
		$p  \  \backslash  \  \zeta$ & $0.1$ & $0.2$ & $0.3$ & $ 0.4$  & $ 0.5$ & $0.6$ & $0.7$ & $0.8$ & $0.9$ & $1$ \\ 
		\hline
		
		$0.6$ & 1 & 1 & 1 & 1 &1  & 1 & 1 & 2 & 2 & 2 \\ 
		
		$0.7$ & 1 & 1 & 1 & 1 & 1 & 1 & 2 & 2 & 2 & 2 \\ 
		
		$0.8$ & 1 & 1 & 1 & 1 & 1 & 1 & 2 & 2 & 2 & 2 \\  
		
		$0.9$ & 1 & 1 & 1 & 1 & 1 & 1 & 2 & 2 & 2 & 2 \\  
		
		$1$ & 1 & 1 & 1 & 1  & 1 & 1 & 2 & 2 & 2 & 2 \\  
		
		\hline
	\end{tabular}
	
	\
	
	\centerline{Model Fitness: 1 for Exponential decay model and 2 for Rational decay model}
	
	\
	
	\begin{minipage}{.4\linewidth}
		\centering\footnotesize
		
		\begin{tabular}{ |c||c|c|c|c|c|} 
			\hline
			$\zeta  \  \backslash  \  p $& $0.6$ & $0.7$ & $0.8$ & $0.9$ & $1$ \\ 
			\hline
			
			$0.1$ & 0.29  & 0.27 & 0.24   & 0.2 & 0.17 \\
			
			$0.2$ & 0.26  & 0.24  & 0.2  & 0.17 & 0.14 \\
			
			$0.3$ & 0.23  & 0.2  & 0.17  & 0.14 & 0.11 \\
			
			$0.4$ & 0.2  & 0.17 &  0.14  & 0.1 & 0.08 \\
			
			$0.5$ & 0.17  &0.14 & 0.1 & 0.07 & 0.05 \\
			
			$0.6$ & 0.13  &0.1 & 0.07 & 0.05 & 0.03\\
			
			$0.7$ &  &  &  &  &  0.01\\  
			
			$0.8$ & &  &  &  &  \\  
			
			$0.9$ &  &  &  &  &  \\  
			
			$1$ &  &  &  &  &  \\  
			
			\hline
		\end{tabular}
		
		\
		
		\centerline{Exponential Decay Rates}
	\end{minipage}\ \ \ \ \ \ \ \ \ \ \ \ \ \ \ \
	\begin{minipage}{.4\linewidth}
		\centering\footnotesize
		
		\begin{tabular}{ |c||c|c|c|c|c|} 
			\hline
			$\zeta  \  \backslash  \  p $& $0.6$ & $0.7$ & $0.8$ & $0.9$ & $1$ \\ 
			\hline
			
			$0.1$ &  &  &  &  &  \\
			
			$0.2$ &  &  &  &  &  \\
			
			$0.3$ &  &  &  &  &  \\
			
			$0.4$ &  & &  &  &  \\
			
			$0.5$ &  & &  &  &  \\
			
			$0.6$ &  & &  &  &  \\
			
			$0.7$ & 0.4 & 0.47 & 0.55 & 0.63 &  \\  
			
			$0.8$ & 0.62 & 0.53 & 0.45  & 0.38 & 0.32 \\  
			
			$0.9$ & 0.52 & 0.43  & 0.36 & 0.3 & 0.25\\  
			
			$1$   & 0.43 & 0.35 & 0.28 & 0.23 & 0.19 \\  
			
			\hline
		\end{tabular}
		
		\
		
		\centerline{Rational Decay Rates}
	\end{minipage}
	
	\caption{Estimated $5$ dimensional convergence rates $r$ for $\lambda=0.2$ by non-linear regression model}
	\label{tab:QMC08}
\end{table}

\subsection{Decoherent case, $p>0$, critical behavior $\zeta_c=1$}
Recall our mainly result (Theorem \ref{finalEqui}) states that for $0<\zeta\leq 1$, the $N$ dimensional time-inhomogeneous decoherent Markov chains converge to theirs equilibrium limit, $\frac{1}{N}$. However, we expect that the critical transition occurs exactly when $\zeta_c=1$ which means that if $\zeta>1$, the large time scale limit is not $\frac{1}{N}$ and depends on the the values of $\lambda$ and the decoherence parameter $p$.

Figure \ref{fig:QMC11} support our expectation and shows the simulation results for $2$ dimensional $P_n(1,1)$ with $\zeta=1.1$ and different values of $\lambda$ and the decoherent parameter $p$. We observe from these figures that for fixed $\zeta$, the probabilities converge, and they don't converge to $\frac{1}{2}$ which is uniform distribution. In other words, the large time scale limits depend on the values of $\zeta$, $\lambda$ and the decoherence parameter $p$. More specifically, the large time scale limits depend on the initial conditions.

\begin{figure}[h!]%
	\centering
	\subfigure[$\lambda=0.3$]{%
		\label{fig:QMC11a}%
		\includegraphics[width=1.9in,height=1.9in]{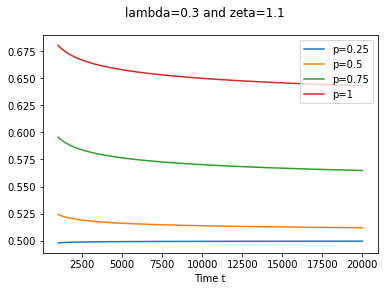}}%
	\hfill
	\subfigure[$\lambda=0.5$]{%
		\label{fig:QMC11b}%
		\includegraphics[width=1.9in,height=1.9in]{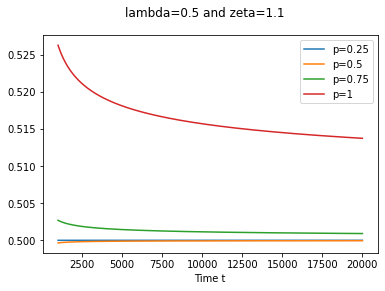}}%
	\hfill
	\subfigure[$\lambda=0.7$]{%
		\label{fig:QMC11c}%
		\includegraphics[width=1.9in,height=1.9in]{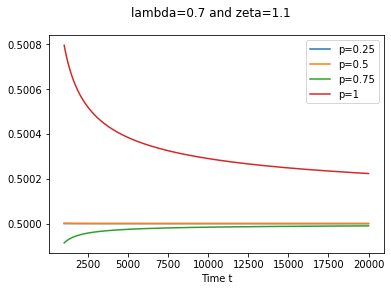}}%	
	\caption{Convergence of $2$ dimensional $P_n(1,1)$ fixing $\zeta=1.1$ for different $\lambda$'s and $p$'s}
	\label{fig:QMC11}
\end{figure} 

\subsection{Time-inhomogeneous quantum Markov chain with cyclic graph}

Instead of the matrix $G$ with strictly positive entries we considered previously, let us consider the quantum Markov chain with cyclic graphs. Consider the same decoherent quantum Markov chain model with the matrix 

$$G=
	\left[
	\begin{array}{ c c c c c}
		0 & \lambda & 0 & 0 & \lambda  \\
		\lambda & 0 & \lambda & 0 & 0  \\
		0 & \lambda & 0 & \lambda & 0  \\
		0 & 0 & \lambda & 0 & \lambda  \\
		\lambda & 0 & 0 & \lambda & 0  
	\end{array} \right]
$$
where $\lambda>0$.

For the pure quantum $p=0$ case, simulation results show that the time-inhomogeneous periodic behavior exists. Moreover, probabilities still concentrate more at the diagonal entries of $P_n$ varying near $\frac{1}{N}$ to 1 while the probabilities on the off-diagonal entries vary from $0$ to values near $\frac{1}{N}$ due to the symmetry. 

On the other hand, the probabilities at each entry also numerically converge to $\frac{1}{N}$ for relatively small $\zeta$'s and $\lambda$'s. 

Figure \ref{fig:QMC10a} illustrates the periodicity with local maximums and minimums of the first column of the matrix $P_n$ for $5$ dimensional pure cyclic quantum Markov chain with $\lambda=\frac{1}{2}$ and $\zeta=\frac{1}{2}$, Figures  
\ref{fig:QMC10b} and \ref{fig:QMC10c} show the convergence to equilibrium property for $5$ dimensional decoherent cyclic quantum Markov chain to $\frac{1}{5}$ with the same parameters when $p=0.01$ and the completely decoherent case $p=1$. Although G does not have all positive entries, we still believe that the transition points are $\zeta_c=1$ and $\zeta_o=2$ for decoherent case and pure quantum case respectively by similar argument in section \ref{section:timInhomoPeriod}.

Furthermore, we believe that the phase of transition occurs at $\zeta_c=1$ and $\zeta_o=2$ for any irreducible $G$.

\begin{figure}[h!]%
	\centering
	\subfigure[$p=0$, $\lambda=0.5$, $\zeta=0.5$]{%
		\label{fig:QMC10a}%
		\includegraphics[width=1.9in,height=1.9in]{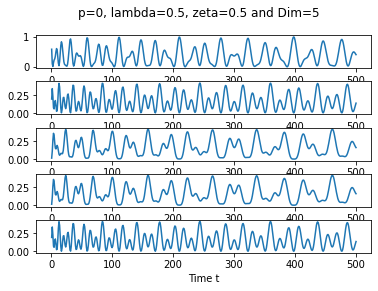}}%
	\hfill
	\subfigure[$p=0.01$, $\lambda=0.5$, $\zeta=0.5$]{%
		\label{fig:QMC10b}%
		\includegraphics[width=1.9in,height=1.9in]{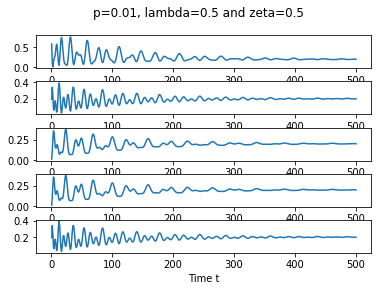}}%
	\hfill
	\subfigure[$p=1$, $\lambda=0.5$, $\zeta=0.5$]{%
		\label{fig:QMC10c}%
		\includegraphics[width=1.9in,height=1.9in]{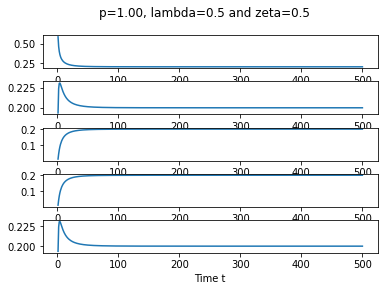}}%
	\caption{Simulated probabilities for $P_n(1,:)$ using cyclic graph $G$ with dimension $5$ and $t=500$}
	\label{fig:QMC10}
\end{figure}

\section{Conclusion and future work}\label{sec:conclusion}
In this paper, we mainly considered the time-inhomogeneous unitary operators, and defined the time-inhomogeneous quantum analogue of the classical Markov chain with decoherence parameter on finite dimensional state space and we interpreted the decoherent parameter as the probability to perform a measurement, that means that at each step, we perform a measurement with a certain probability. We proved the Markov properties at the geometric measurement times using the path integral representation.

We also made the conclusion that the time-inhomogeneous quantum Markov chain on finite dimensional state spaces with non zero probability of measurement converges to an equilibrium limit when $0\leq \zeta \leq 1$ as time approaches to infinity.

Additionally, we analyzed the periodicity, the decay rate and the convergence of our quantum Markov chains. We numerically concluded that our quantum Markov chain exhibits a time-inhomogeneous periodic and non-periodic behaviors with phase of transition $\zeta_o=2$ for the probability of measurement close to zero, and a large time scale equilibrium behavior with phase of transition $\zeta_c=1$ for the positive probability of measurement. Moreover, the decoherent Markov chain decays exponentially for small $\zeta$ and decays with power rule for large $\zeta$ with critical point $\zeta_d$ between $0.6$ and $0.7$.

Instead of fully connected graph, numerical simulations were also done considering the cyclic graph, and illustrated similar periodic behavior for coherent quantum Markov chain and convergence to equilibrium for decoherent quantum Markov chain for small $\lambda$'s $\zeta$'s. As result, we expect that the phase of transition occurs at $\zeta_c=1$ and $\zeta_o=2$ for cyclic graphs.

Not to mention, our work is not only based on finite quantum state space, also called "Qudit" in high dimensional quantum computation which researchers have developed many applications recently including quantum circuit building, quantum algorithm design and quantum experimental methods (see \cite{WangQudits}), but might also be related to one of the fundamental aspects of quantum annealing algorithm which was first proposed by B. Apolloni, N. Cesa Bianchi and D. De Falco in \cite{ApolloniQuantumAnnealing} and \cite{ApolloniQuantumStochastic} and later formulated by T. Kadowaki and H. Nishimori in \cite{KadowakiQuantumAnnealing}. With this motivation, our future work will focus on applications of quantum search algorithms with time-inhomogeneous quantum Markov chain in high dimensional spaces or general graph. 

For $0<p\leq 1$ and $\zeta=0$, our model is related to open quantum random walks in \cite{AttalOpenQuantum1} and \cite{AttalOpenQuantum2} which has a lot of potential applications in quantum computing. It would be very interesting to see how open quantum walk can be generalized to time-inhomogeneous case.

Furthermore, our numerical results will also lead us challenging analytical problems, e.g., generalizing our results to time-inhomogeneous quantum Markov chain associated with any connected graphs.

\end{document}